\newcounter{mytempeqncnt}
\newtheorem{monTh}{Theorem}
\newtheorem{monCr}{Corollary}
\begin{document}
\title{Effective Capacity Analysis of H-ARQ Assisted Cooperative Communication Systems}
\author{Mokhtar~Bouteggui,~\IEEEmembership{}
        Fatiha~Merazka,~\IEEEmembership{}
        and~Gunes~Karabulut~Kurt~\IEEEmembership{Senior Member}
\thanks{M. Bouteggui and F. Merazka are with the Department of Telecommunications, Electrical Engineering Faculty, USTHB University, 16111, Algiers, Algeria, e-mail: \{mbouteggui, fmerazka\}@usthb.dz.}
\thanks{G. Karabulut Kurt is with the Department of Communications and Electronics Engineering,
Istanbul Technical University, 34469, Istanbul, Turkey, e-mail: gkurt@itu.edu.tr.}
\thanks{Manuscript received April 19, 2005; revised August 26, 2015.}}
\markboth{IEEE TRANSACTIONS ON WIRELESS COMMUNICATION,~Vol.~14, No.~8, August~2015}%
{Shell IEEE TRANSACTIONS ON WIRELESS COMMUNICATION,~Vol.~14, No.~8, August~2015}
\maketitle
\begin{abstract}
In this paper, the effective capacity of cooperative communication (CC) systems with hybrid Automatic repeat request (HARQ) is derived.
The derived expressions are valid for any channel distribution and with any arbitrary number of retransmissions by the source and relay for both HARQ-repetition redundancy (RR) and HARQ-incremental redundancy (IR) over asymmetric channels. As an example, we use the derived EC expression over Rayleigh fading channels. Several results are obtained for a low rate and signal-to-noise ratio (SNR). We can see that the EC attends its maximum value with a small number of retransmissions. As expected when the relay-destination channel has low SNR, it is better than the relay does not participate especially when we assign a large number of transitions at the relay. For high data rates and strict quality of service (QoS) constraints, it is better to increase the number of relay transmissions. Finally, when we increase the number of source retransmissions, the effective capacity improves even for low values.
\end{abstract}
\begin{IEEEkeywords}
Automatic repeat request, retransmission, recurrence relation approach, effective capacity, network coding, effective bandwidth, repetition redundancy, incremental redundancy.
\end{IEEEkeywords}
\section{Introduction}
\IEEEPARstart Wireless communication systems have experienced several revolutions undergone several revolutionary developments in the last decade. However, these systems suffer from unreliable channels, which introduce the need for new wireless transmission protocols enabling higher data rates and improved communication reliability. When the received packet is detected to be in error, packet retransmission is often requested and used for more reliable and efficient communication over fading channels \cite{caire_throughput_2001}. This scheme is called automatic repeat request (ARQ). The efficiency of ARQ can be improved by combining new and previously received information instead of discarding them, through the use of the so-called Hybrid-ARQ (HARQ) schemes \cite{larsson_analysis_2013}.
HARQ schemes may be classified into two categories: i) Chase combining HARQ (CC-HARQ), ii) Incremental redundancy HARQ (IR-HARQ) \cite{larsson_throughput_2014}\cite{larsson_throughput_2016}.
Another line classification for HARQ is, i) Truncated-HARQ, for limiting the maximum number of retransmissions, or ii) Persistent-HARQ by continuous retransmissions until error-free decoding of the packet \cite{larsson_throughput_2014}\cite{yang_truncated-arq_2016}.
To improve communication reliability, HARQ schemes can be jointly used with cooperative communication (CC) techniques. In cellular systems, and due to the advantage broadcast nature of the wireless channel, CC techniques can enable high reliability by combating the performance degrading effects of the wireless fading channels \cite{laneman_cooperative_2004}\cite{su_cooperative_2008}. Through the use of relay nodes that aid communication between the sender and the destination, the diversity order of the system can be improved. 
The performance evaluations of CC systems frequently focus on error or outage analysis. However, for some data services, such as video streaming, video conferencing, and online gaming over wireless networks, the delay is critically an important parameter for quality of service (QoS) guarantees. On the other hand, due to traffic and channel randomness, the reliable transmission may not be provided all the time. Therefore, depending on the type of data transmission, delay-violation probability and buffer overflow concerns become critical for the sender.
So, it is essential to model a wireless channel in terms of QoS metrics such as data rate, delay, and delay-violation probability. However, the existing channel models do not explicitly characterize a wireless channel in terms of such QoS metrics.\\
An alternative performance metric for queuing systems, is the notion of effective capacity \cite{dapeng_wu_effective_2003}, which is the dual of the effective bandwidth \cite{chen-shang_chang_effective_1995}.
The effective capacity provides the maximum sustainable arrival rate for the buffer under certain QoS constraints. The concept of effective capacity has gained notable attention, and it has been investigated in several transmission scenarios.
For cc, starting with a single source, relay, and destination, the authors in \cite{harsini_effective_2012} analyze and optimize the effective capacity for decode and forward (DF) relay system, where both, source and relay are equipped with adaptive modulation and coding (AMC), in conjunction with a cooperative ARQ protocol over time-correlated fading channels.
In \cite{hu_blocklength-limited_2016}, the authors consider DF relay system with quasi-static fading channel. Both of these works investigate, blocklength-limited and the effective capacity of relaying systems. For the channel, the case of only the average channel state information (CSI) is available at the sender and the case of perfect CSI is considered.
In \cite{phan_optimal_2016}, a buffer-aided relaying network was considered where
both source and relay employ their buffer. The authors assume that there is no direct link between the source
and destination, adaptive link selection relaying schemes are proposed for fixed and adaptive power allocation at
source and relay.
In \cite{yuli_yang_relay_2013} the relay nodes are considered having different modulation capabilities.
The effective capacity is derived first, then an effective capacity-based relay selection criterion is proposed.
In \cite{hu_qos-constrained_2016}, a system with a single source, destination, and multi-relay with DF protocol is considered. Four cooperative ARQ protocols are considered where only the average CSI is available at the source and relays. A maximal-ratio combining (MRC) is assumed to be used at the destination during the retransmissions.
In \cite{qiao_fixed_2017}, the authors consider a buffer-aided diamond relay systems, with full-duplex DF relays. The buffers at source and relay nodes are considered of infinite size, with statistical queuing constraints for each one. The maximum effective capacity is obtained for the conventional relay selection protocol, (DF, time division-DF, and broadcast-DF). Also, a selection forwarding policy is proposed to maximize the statistical delay exponent at the source.
In \cite{karatza_effective_2018}, a novel analytical framework for effective capacity is introduced, a multi-source multi-destination system with AF relaying protocol. The relay nodes are assumed without buffer, with the presence of co-channel interference.
Exact analytical expressions and tight bounds expressions for several linear precoding techniques are derived. The effective capacity maximizing problem is also addressed.
The main contributions of our work are summarized below. To the best of our knowledge, there is no prior work or effective capacity expression for CC with retransmission schemes with any arbitrary number of transmissions at the source and the relay using packet combining at both relay and destination over asymmetric channels for any distribution. Also, there are no works that use the recurrence relation approach for CC.

\begin{itemize}
\item Two strategies are proposed.
\item Outage probabilities are derived and verified via simulation results.
  \item The effective capacity for CC using lossless ARQ relay and destination is derived.
  \item The effective capacity for CC for any number of retransmissions from the source and the relay, for both HARQ-RR and HARQ-IR, over asymmetric channels for any channel distribution is derived.
\end{itemize}
\subsection{Assumptions}
A half duplex system is considered. The feedbacks are assumed to be error-free and latency-free. The channel coefficients are assumed to be constant during the packet but change every packet, and the noise is additive white Gaussian noise (AWGN). Perfect CSI is assumed at the receivers. At the $k$-th and $l$-th transmissions, we assume  $P_{uv,k}=1-Q_{uv,k}$, and $P_{srd,k;l}=1-Q_{srd,k;l}$. Both relay and destination use packets combating.
\vspace{-0.8cm}
\subsection{Notation}
For notation simplification, we denote by $s,r$ and $d$ the source, relay and destination respectively. We let $u$ and $v$ where $u\in \{s,r\}$, $v\in\{r,d\}$ and $u\neq v$. All random variables are iid. We use the notation $X_i\sim E(\mu_i)$ to denote that a random variable $X_i$ follows exponential distribution with parameter $\mu_i$. For $t>0$, the probability density function (pdf) and cumulative distribution function (cdf) of $X_i$ are $f_{X_i}(t)$ and $F_{X_i}(t)$ are given by \cite[3.19a and 3.19b]{miller_probability_2004}, respectively.
Let $H_i=X_i+\alpha_i$ be a shifted exponential random variable with parameter $\mu_i$ and the shifted parameter $\alpha_i$. We use the notation $H_i\sim ShE(\mu_i,\alpha_i)$ to denote a shifted exponential random variable with parameters $\mu_i$ and $\alpha_i$. The pdf and cdf of $H_i$ are $f_{H_i}(t)=f_{X_i}(t-\alpha_i)$ and $F_{H_i}(t)=F_{X_i}(t-\alpha_i)$, respectively.
The lower incomplete gamma function is $\gamma(a,x)$ and the upper incomplete gamma function is $\Gamma(a,x)$ are given by \cite[8.350.1 and 8.350.2]{gradshtein_table_2015}, respectively.
The Mellin transform of a function $f(t)$ denoted by $\mathcal{M}_s\{f(t)\}(s)$ is defined by \cite[2.1]{mathai_h-function:_2010}.
The inverse Mellin transform is given by
\cite[2.2]{mathai_h-function:_2010}.
Let $\large{H}^{m,n}_{p,q}$ be the generalized upper incomplete Fox’s $H$ function given by \cite[10]{chelli_performance_2014}.
\section{System model}
\subsection{System description}
We consider a CC system composed of a single source, relay, and destination. The relay uses selective DF (S-DF) as the relaying protocol \cite{su_cooperative_2008}. The transmission is completed in two orthogonal time slots, in the first time slot the source sends the symbol $x$ to the destination which is also received by the relay. In the second time slot, if the relay can decode $x$, then it forwards $x$ to the relay, otherwise the relay remains idle. At the destination, a maximum ratio combiner is used to decode the transmitted symbols.
\subsection{Channel model}
Let $y_{sd}$ and $y_{sr}$ be the received signals at the destination and the relay from the source in the first time slot and $y_{rd}$ the received signal from the relay at the destination in the second time slot. These signals are given by
\begin{align}
  y_{sd}=& \sqrt{P_s}h_{sd}x+n_{sd} \\
  y_{sr}=& \sqrt{P_s}h_{sr}x+n_{sr} \\
  y_{rd}=& \sqrt{P_r}h_{rd}x+n_{rd}
\end{align}
where $P_s$ and $P_r$ are the transmitted power at the source (s) and the relay (r), $h_{uv}$ are the channel coefficients between $u$ and $v$ where $u\in \{s,r\}$, $v\in\{r,d\}$ and $u\neq v$. The noise between $u$ and $v$ is $n_{uv}$ is a complex AWGN with zero mean and variance $N_{uv}$.
\section{Effective capacity}
\subsection{QoS Exponent}
The exponential decay rate of the delay-bound QoS violation
probabilities indicated by the QoS exponent $\theta$ \cite{cheng-shang_chang_stability_1994}, is given by
\begin{equation}
  -\lim_{q_{th}\to \infty} \frac{\ln(\mathbb{P}\{q(\infty)>q_{th}\})}{q_{th}}=\theta
\end{equation}
where $q_{th}$ is a threshold indicating the queue length bound.
Larger $\theta$, corresponds to more stringent QoS requirements, which implies that the system cannot tolerate any delay.
On the other hand, when $\theta$ is smaller, corresponds to looser QoS requirement, which implies that the system can tolerate an arbitrarily long delay.
\subsection{Recurrence Relation Approach}
When we use the recurrence relation approach, the considered retransmission schemes have a set of multiple communication modes and multiple decoded packets \cite{larsson_effective_2016}. For example, in \cite{larsson_effective_2016}, the authors considered a two users NC-ARQ.
For that, a set $S$ with $L=3$ communication modes are used and the maximum number of decoded packets is $v_{max}=2$.
For multiple communication modes and multiple decoded packets is $2$-user NC-ARQ in \cite{larsson_effective_2016}. For more details, we refer the reader to see \cite{larsson_effective_2016}.
\begin{monTh}\cite[Cor 1]{larsson_effective_2016}\label{Thm1}
For retransmission scheme and transmission rate $R$, with $k \to \infty$, let $S$ be a set with $L$ communication modes, $v\in\{0,1,\cdots v_{max}\}$ communication packets per transmission and $P_{v\tilde{s}s}$ the transition probabilities from communication mode $\tilde{s}$ to $s$.
For each communication mode $s$, the effective capacity is given by
\begin{equation}\label{Eq2}
C_{\text{eff}}=-\frac{\ln(\lambda_+)}{\theta},
\end{equation}
 where $\lambda_+=\max\{|\lambda_1|,|\lambda_2|,\cdots,|\lambda_{L}|\}$ is the spectral radius of the block companion matrix $\mathbf{A}$ given by
 \begin{equation}\label{Eq3}
\mathbf{A}=\begin{bmatrix}
  \alpha_{11} & \alpha_{21} & \cdots & \alpha_{\tilde{L}1}\\
  \alpha_{12} & \alpha_{22} & \cdots & \alpha_{\tilde{L}2}\\
  \vdots & \vdots & \ddots & \vdots\\
  \alpha_{1L} & \alpha_{2L} & \cdots & \alpha_{\tilde{L}L}\\
\end{bmatrix}
 \end{equation}
and
$\{\lambda_1,\lambda_2,\cdots,\lambda_{L}\}$ are the eigenvalues of $\mathbf{A}$ and
\begin{equation}\label{AAA}
      \alpha_{\tilde{s}s}=\sum_{v=0}^{v_{max}}P_{v\tilde{s}s}e^{-\theta vR}
\end{equation}
\end{monTh}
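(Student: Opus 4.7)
The plan is to derive the effective capacity formula from its large-deviations definition by exploiting the Markov structure of the communication-mode process and then invoking Perron--Frobenius theory on the weighted transition matrix $\mathbf{A}$. First, I would recall the Gärtner--Ellis representation of the effective capacity: if $v_k$ denotes the number of decoded packets delivered during transmission $k$ and $S(k)=R\sum_{i=1}^{k} v_i$ is the cumulative service (in bits), then the effective capacity associated with QoS exponent $\theta$ is
\begin{equation}
C_{\text{eff}} \;=\; -\lim_{k\to\infty}\frac{1}{k\theta}\,\ln \mathbb{E}\!\left[e^{-\theta S(k)}\right],
\end{equation}
whenever the limit exists. This is the standard dual of the effective bandwidth used in \cite{chen-shang_chang_effective_1995,dapeng_wu_effective_2003}, and sets up the problem of computing the logarithmic growth rate of a moment-generating quantity.

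The second step is to exploit the hypothesis that the communication mode $s_k\in S$ evolves as a finite Markov chain with labelled transitions $(\tilde s\to s,\,v)$ of probability $P_{v\tilde s s}$. Conditioning successively on the mode sequence and using independence of the per-step reward given the mode transition, every step of the chain contributes a factor
\begin{equation}
\sum_{v=0}^{v_{\max}} P_{v\tilde s s}\,e^{-\theta vR}\;=\;\alpha_{\tilde s s},
\end{equation}
which is exactly entry $(\tilde s,s)$ of $\mathbf{A}$. Hence, for any initial distribution $\boldsymbol{\pi}_0$ on the modes,
\begin{equation}
\mathbb{E}\!\left[e^{-\theta S(k)}\right] \;=\; \boldsymbol{\pi}_0^{T}\,\mathbf{A}^{k}\,\mathbf{1}.
\end{equation}
This recognises the per-transmission exponential moment as one application of the matrix $\mathbf{A}$, which is the essence of the recurrence-relation approach and the content of \eqref{AAA}.

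The third step is to extract the asymptotic exponent. Since $\mathbf{A}$ is entrywise non-negative, Perron--Frobenius theory together with Gelfand's formula $\lambda_+=\lim_{k\to\infty}\|\mathbf{A}^k\|^{1/k}$ gives
\begin{equation}
\lim_{k\to\infty}\frac{1}{k}\,\ln\!\left(\boldsymbol{\pi}_0^{T}\mathbf{A}^{k}\mathbf{1}\right) \;=\; \ln \lambda_+,
\end{equation}
where $\lambda_+=\max_i|\lambda_i|$ is the spectral radius of $\mathbf{A}$, and the limit does not depend on $\boldsymbol{\pi}_0$ as long as the chain communicates with the Perron component. Substituting into the definition yields $C_{\text{eff}}=-\ln(\lambda_+)/\theta$, which is the claim \eqref{Eq2} and is the same number for every starting mode $s$.

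The main obstacle I anticipate is not the algebra but a clean justification of the last limit when $\mathbf{A}$ is reducible or periodic: one must argue that the dominant eigenvalue still controls the growth regardless of the starting mode. The standard remedy is either to pass to the Jordan form and bound $\boldsymbol{\pi}_0^{T}\mathbf{A}^{k}\mathbf{1}$ between constant multiples of $\lambda_+^{k}$, or to invoke Gelfand's formula directly together with the elementary bound $\|\mathbf{A}^k\|_{\infty}\le c_1 k^{L-1}\lambda_+^{k}$ and $\boldsymbol{\pi}_0^{T}\mathbf{A}^{k}\mathbf{1}\ge c_2\,\lambda_+^{k}$ for some positive constants depending on $\boldsymbol{\pi}_0$. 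Once this technicality is handled, the polynomial correction is annihilated by the factor $1/k$ and the rate collapses to $\ln\lambda_+$. Everything else reduces to routine bookkeeping of identifying $\alpha_{\tilde s s}$ with the per-step exponential moment.
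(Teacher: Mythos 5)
Your argument is correct, but note that the paper itself offers no proof of this statement: Theorem~\ref{Thm1} is quoted verbatim from \cite[Cor.~1]{larsson_effective_2016}, so there is no in-paper derivation to compare against. Measured against the cited source, your route is substantively the same result obtained in an equivalent packaging: the reference derives a linear recurrence for the mode-conditioned moment generating functions $\phi_s(k)=\mathbb{E}\bigl[e^{-\theta S(k)}\mathbf{1}\{s_k=s\}\bigr]$ and reads off the decay rate from the largest-modulus root of the characteristic polynomial of the (block) companion matrix, whereas you write the same recurrence directly as $\boldsymbol{\phi}(k)=\mathbf{A}\boldsymbol{\phi}(k-1)$ and apply Gelfand's formula plus Perron--Frobenius to $\boldsymbol{\pi}_0^{T}\mathbf{A}^{k}\mathbf{1}$. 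The two are mathematically identical; your version buys a cleaner treatment of the ``independent of the starting mode'' claim and makes explicit the only genuine technicality, namely that the lower bound $\boldsymbol{\pi}_0^{T}\mathbf{A}^{k}\mathbf{1}\ge c\,\lambda_+^{k}$ needs the starting mode to reach the dominant communicating class --- a condition that holds in the schemes of this paper since every mode communicates with mode $s'_1=1$. I would only ask you to state explicitly the conditional-independence assumption you use in step two (that the pair consisting of the next mode and the reward $v$ is drawn with probability $P_{v\tilde{s}s}$ given the current mode, independently across transmissions given the mode sequence), since that is what licenses the factorization of $\mathbb{E}\bigl[e^{-\theta S(k)}\bigr]$ into powers of $\mathbf{A}$.
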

\section{Effective capacity for CC with retransmission schemes}
\subsection{Strategy I}
In this strategy, lossless ARQ is used by the relay and destination. Since we use ARQ, each transmitted packet by the source is independently examined by the relay and destination. Similarly, each packet transmitted by the relay is independently examined by the destination. Also, for lossless ARQ each packet will be transmitted until being correctly decoded by the destination. For asymmetric channels, the probability that both the relay and definition fail to decode the transmitted packet by the source using ARQ are
\begin{align}
  Q_{sd}=&\mathbb{P}\left\{\log_2\left(1+\Gamma_{sd}g_{sd,i}\right)\leq R\right\}  \\
  Q_{sr}=&\mathbb{P}\left\{\log_2\left(1+\Gamma_{sr}g_{sr,i}\right)\leq R\right\}
\end{align}
where $\Gamma_{sd}=\frac{P_s}{N_{sd}}$ and $\Gamma_{sr}=\frac{P_s}{N_{sr}}$ are the received signal to noise ratios (SNR) at destination and relay,  respectively and $g_{sd,i}=|h_{sd,i}|^2$  and $g_{sr,i}=|h_{sr,i}|^2$ are the channel gain at the $i$-th transmission between source-destination and source relay. The channel gains between $u$ and $v$ are random variables with a set of parameters denoted by $\Delta_{uv}$. For example, Rayleigh fading channels gain are exponential random variables with parameter $\Delta_{uv}=(1/\delta^2_{uv})$ and for Nakagami-$m$ fading, the channels gain are gamma distribution with parameters $\Delta_{uv}=(m_{uv},1/\delta^2_{uv})$.
The probability that definition failed to decode the transmitted packet after transmission by the relay using ARQ is given by
\begin{align}
  Q_{rd}=&\mathbb{P}\left\{\log_2\left(1+\Gamma_{rd}g_{rd,i}\right)\leq R\right\}
\end{align}
where $\Gamma_{rd}=\frac{P_r}{N_{rd}}$ is the received SNR at destination $g_{rd,i}=|h_{rd,i}|^2$ is the channel gain at the $i$-th transmission between relay-destination.
In the case where all channels are symmetric, we have
\begin{equation}
  Q_{uv}=\mathbb{P} \left\{\log_2\left(1+\Gamma g_{i}\right)\leq R\right\}
\end{equation}
where $\Gamma=\frac{P}{N}$ and $g_{i}$ are random variables with a set of parameters $\Delta$ for $u\in\{s,r\}$ $v\in\{r,d\}$ such that $u\neq v$. This means that the transmit power at the source and relay is $P$, the noise variance in all channels is $N$ and all channels gains have the same distribution with the same set of parameters $\Delta$. Also, let $P_{uv}=1-Q_{uv}$ be the probability of an error-free packet decoded.
\begin{monCr}\label{Cr1}
The effective capacity of a CC system using strategy I, is given by Theorem \ref{Thm1}
for $k \to \infty$ with
\begin{equation}\label{CC_Mat1}
\mathbf{A}=\begin{bmatrix}
  Q_{sd}Q_{sr}+P_{sd}e^{-\theta R} &P_{rd}e^{-\theta R} \\
  P_{sr}Q_{sd} & Q_{rd}
\end{bmatrix}.
\end{equation}
\end{monCr}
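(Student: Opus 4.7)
The plan is to reduce Strategy I to an instance of Theorem \ref{Thm1} by identifying the underlying two-state Markov chain on communication modes and reading off its transition structure, then populating the block companion matrix according to (\ref{Eq3}) and (\ref{AAA}). Because Strategy I uses S-DF with selective forwarding and lossless ARQ on every link, each time slot is initiated either by the source or by the relay, so I would take $L=2$ modes, namely mode $1$ (source transmits) and mode $2$ (relay transmits), and $v_{\max}=1$ since at most one fresh packet can be delivered per slot.

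Second, I would enumerate the outcomes of each mode using the independence of the S-D, S-R and R-D channel events (valid because under lossless ARQ each transmission is examined independently). In mode $1$, the destination decodes with probability $P_{sd}$, which delivers the packet ($v=1$) and returns the system to mode $1$ regardless of the relay's outcome, contributing $P_{sd}e^{-\theta R}$ to $\alpha_{11}$; the relay alone decodes with probability $Q_{sd}P_{sr}$, which delivers no packet ($v=0$) but moves the system to mode $2$, giving $\alpha_{12}=P_{sr}Q_{sd}$; finally both fail with probability $Q_{sd}Q_{sr}$, contributing $Q_{sd}Q_{sr}$ (from $v=0$) to $\alpha_{11}$. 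In mode $2$, the destination decodes from the relay with probability $P_{rd}$, delivering the packet and returning to mode $1$, yielding $\alpha_{21}=P_{rd}e^{-\theta R}$, while with probability $Q_{rd}$ the relay retransmits in mode $2$, giving $\alpha_{22}=Q_{rd}$.

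Third, I would plug the four pairs $(v,P_{v\tilde{s}s})$ into (\ref{AAA}) to obtain
\begin{equation*}
\alpha_{11}=Q_{sd}Q_{sr}+P_{sd}e^{-\theta R},\quad \alpha_{12}=P_{sr}Q_{sd},\quad \alpha_{21}=P_{rd}e^{-\theta R},\quad \alpha_{22}=Q_{rd},
\end{equation*}
and arrange them in the $L\times L$ companion matrix exactly as in (\ref{Eq3}). This reproduces the matrix $\mathbf{A}$ claimed in (\ref{CC_Mat1}), and Theorem \ref{Thm1} then immediately gives $C_{\text{eff}}=-\ln(\lambda_+)/\theta$.

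The only genuinely non-mechanical step is the combinatorial bookkeeping in paragraph two: one must recognize that the S-R outcome is irrelevant whenever the destination itself decodes from the source (hence the factor $P_{sd}$, not $P_{sd}P_{sr}+P_{sd}Q_{sr}$, is written as a single term in $\alpha_{11}$), and that the unique path from mode $1$ to mode $2$ is the event $\{\text{destination fails}\}\cap\{\text{relay succeeds}\}$. Once those two observations are made, no further estimation is required, and the result is simply a specialization of Theorem \ref{Thm1} to a two-mode chain whose statistics are determined by the per-link outage probabilities $Q_{sd},Q_{sr},Q_{rd}$.
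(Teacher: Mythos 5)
Your proposal is correct and takes essentially the same approach as the paper: the identical two-mode chain (mode $1$ for source transmission, mode $2$ for relay transmission), the same transition events, and the same substitution into equation (\ref{AAA}) to populate $\mathbf{A}$. Your write-up is in fact more explicit than the paper's own proof, which states only the $Q_{sd}P_{sr}$, $P_{rd}$ and $Q_{rd}$ transitions and leaves the $P_{sd}e^{-\theta R}$ and $Q_{sd}Q_{sr}$ contributions to $\alpha_{11}$ as ``readily identified.''
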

\begin{proof}
We consider two-mode operations with $S = \{1, 2\}$. In mode $s=1$, the source sends a packet to the destination. We enter mode $s = 2$ when a packet intended for destination is correctly decoded only by the relay but not by the destination with probability $Q_{sd}P_{sr}$. Now, in mode $s=2$, the relay keeps on sending packets until the destination decodes correctly. If the destination succeeds to decode correctly, we go back to mode $s=1$ with probability $P_{rd}$, otherwise we stay in mode $s=2$ with probability $Q_{rd}$. Using equation (\ref{AAA}), the matrices $\mathbf{A}$ in Corollary \ref{Cr1} is readily identified.
\end{proof}
\subsection{Strategy II}
In this strategy, the source and the relay use truncated HARQ i.e the number of transmissions of packets from the source and the relay is $M$ and $N$, respectively. Also, when the source sends, both relay and destination use HARQ. At the last attempt from the source or the relay, if the destination fails to decode the current packet, this packet will be dropped and the source sends a new packet. For each packet, the transmission starts by the source. If the current packet is correctly decoded by the destination, both the source and the relay receive an acknowledgment (ACK) from the destination, and the source sends a new packet. If the destination fails to decode the current packet correctly then the destination feeds back negative acknowledgment (NACK). Here, while the number of retransmissions by the source is less than or equal to $M$ and both the relay and the destination fail to decode the transmitted packet, the source keeps on sending. If the number of retransmissions exceeds $M$ and both the relay and the destination do not decode correctly the transmitted packet, the source drops this packet and sends a new one. If the relay decodes the current packet, the source stops sending and the relay starts the retransmission until the destination decodes the packet or the maximum number of retransmissions is reached. If the number of retransmissions by the relay exceeds $N$, the relay stops the transmission and the source sends a new packet.

First, we consider the use of HARQ-RR.
Let $k_1\in \{1,2,\cdots,M\}$ the $k_1$-th transmission by the source where both the relay and the destination do not decode correctly the current packet.
The probabilities that both the relay and the destination fail to decode the transmitted packet by the source in the $k_1$ transmission using HARQ-RR are
\begin{align}\label{RR_Sym1}
  Q_{sd,k_1}=&\mathbb{P}\left\{\log_2\left(1+\sum_{i=1}^{k_1}\Gamma_{sd}g_{sd,i}\right)\leq R\right\}  \\ \label{RR_Sym2}
  Q_{sr,k_1}=&\mathbb{P}\left\{\log_2\left(1+\sum_{i=1}^{k_1}\Gamma_{sr}g_{sr,i}\right)\leq R\right\}
\end{align}
Let $l\in\{1,2,\cdots,M\}$ the $l$-th transmission where only the relay decodes the transmitted packet but not the destination and let $k_2\in \{1,2,\cdots,N\}$ the $k_2$-th transmission by the relay to the destination. We assume that the probability of an error-free packet decoded at
the $l$-th transmission is $P_{sd,l}=1-Q_{sd,l}$.
The probability that the destination fails to decode the transmitted packet after $l$ transmissions from the source and $k_2$ transmissions from the relay using HARQ-RR is given by
\begin{align}\label{RR_Asym3}
  Q_{srd,l;k_2}=&\mathbb{P}\left\{\log_2\left(1+\sum_{i=1}^{l}\Gamma_{sd}g_{sd,i}+\sum_{i=1}^{k_2}\Gamma_{rd}g_{rd,i}\right)\leq R\right\}.
\end{align}
The probability of an error-free packet decoded after $l$ transmissions from the source and $k_2$ from the relay is $P_{srd,l;k_2}=1-Q_{srd,l;k_2}$.
In the case where the channels are symmetries, for HARQ-RR we have
\begin{align}\label{RR_Sym4}
  Q_{uv,k_1}&=\mathbb{P}\left\{\log_2\left(1+\sum_{i=1}^{k_1}\Gamma g_i\right)\leq R\right\}\\ \label{RR_Sym5}
  Q_{srd,l;k_2}&=\mathbb{P}\left\{\log_2\left(1+\sum_{i=1}^{l+k_2}\Gamma g_i\right)\leq R\right\}
  \end{align}
%

We also consider the use of HARQ-IR, where the probabilities that both the relay and the destination fail to decode the transmitted packet by the source in the $k_1$ transmissions using HARQ-IR are
\begin{align}\label{IR_Sym1}
  Q_{sd,k_1}=&\mathbb{P}\left\{\sum_{i=1}^{k_1}\log_2\left(1+\Gamma_{sd}g_{sd,i}\right)\leq R\right\} \\ \label{IR_Sym2}
  Q_{sr,k_1}=&\mathbb{P}\left\{\sum_{i=1}^{k_1}\log_2\left(1+\Gamma_{sr}g_{sr,i}\right)\leq R\right\}
\end{align}
The probability that destination fails to decode the transmitted packet after $l$ transmissions by the source and $k_2$ transmissions by the relay using HARQ-IR is given by
\begin{align}\label{IR_Asym3}
  Q_{srd,l;k_2}=&\mathbb{P}\Bigg\{\sum_{i=1}^{l}\log_2\left(1+\Gamma_{sd}g_{sd,i}\right)+\sum_{i=1}^{k_2}\log_2\left(1+\Gamma_{rd}g_{rd,i}\right)\leq R\Bigg\}
\end{align}
In the case where the channels are symmetries, for HARQ-IR we have
\begin{align}\label{IR_Sym4}
  Q_{uv,k_1}&=\mathbb{P}\left\{\sum_{i=1}^{k_1}\log_2\left(1+\Gamma g_i\right)\leq R\right\}\\ \label{IR_Sym5}
  Q_{srd,l;k_2}&=\mathbb{P}\left\{\sum_{i=1}^{l+k_2}\log_2\left(1+\Gamma g_i\right)\leq R\right\}
  \end{align}
\begin{monCr}\label{Cr2}
For a CC using strategy II with $M$ transmissions by the source and $N$ transmissions by the relay. Let $S=S_s\cup S_r$ be a set with $|S|=L=M\times (N+1)$ communication modes, where $S_s=\{s^{'}_1,s^{'}_2,\cdots s^{'}_M\}$ are the set of communication modes that the source sends and $S_r=\{s=s^{'}_l+i\}$ for all $i\in\{1,2\cdots,N\} $ and $l\in\{1,2,\cdots,M\}$ are the set of communication modes that the relay sends. The number of communicated packets is $v\in\{0,1\}$. The effective capacity is given by Theorem \ref{Thm1} for $k\to \infty$ with $\mathbf{A}$ defined by
\begin{description}
  \item [(1)] $s^{'}_1=1$ the mode where the source sends first time.
  \item [(2)] For the $l$-th transmission from source the equivalent mode is $s^{'}_l=s^{'}_{l-1}+N+1$, for $l\in \{2,\cdots M\}$.
  \item [(3)] At the $l$-th attempts from the source, if only the relay decodes the packet correctly, the relay sends packet for the destination in modes $s=s^{'}_l+1$ up to $s=s^{'}_l+N$.
  \item [(4)] If $M=1$, we stay in mode $s^{'}=1$ with probability $Q_{sd,1}Q_{sr,1}$ or $P_{sd,1}$. We go to mode $s=2$ with probability $Q_{sd,1}P_{sr,1}$.
  \item [(5)] If $M>1$, we stay in mode $s^{'}_1=1$ with probability $P_{sd,1}$. While $1<l< M$, we go from mode $s^{'}_{l-1}$ to mode $s^{'}_l$ with a probability $Q_{sd,l-1}Q_{sr,l-1}$ or to mode $s^{'}_1=1$ with probability $P_{sd,l-1}$ . If $l=M$, first
       we go from mode $s^{'}_{M-1}$ to mode $s^{'}_M$ with probability $Q_{sd,M-1}Q_{sr,M-1}$ and to mode $s^{'}_1=1$ with probability $P_{sd,M-1}$, then we go from mode $s_M^{'}$ to mode $s^{'}_1=1$ with probability $Q_{sd,M}Q_{sr,M}$ or $P_{sd,M}$.
  %
  \item [(6)] If $M>1$, for $l\in\{1,2,\cdots,M\}$, we go from mode $s^{'}_l$ to mode $s=s^{'}_l+1$ with probability $Q_{sd,l}P_{sr,l}$. The relay also uses HARQ.
  \item [(7)] For each $l\in\{1,2,\cdots,M\}$, first we put $i=1$. While $i<N$ we go from mode $s=s^{'}_l+i$ to mode $s=s^{'}_l+i+1$ with probability $Q_{srd,l;i}$ or to mode $s=1$ when $D$ decodes correctly with probability $P_{srd,l;i}$ then add one to $i$ ($i=i+1$). If $i=N$, we go from mode $s=s^{'}_l+N$ to mode $s^{'}_1=1$ with probability $Q_{srd,l;N}$ or $P_{srd,l;N}$.
      \item [(8)] All other transitions are with probabilities $0$.
\end{description}
For HARQ-RR $Q_{sd,i}, Q_{sr,i}$ and $Q_{srd,l;k}$ are given by equations (\ref{RR_Sym1}),(\ref{RR_Sym2}) and (\ref{RR_Asym3}), respectively for asymmetric channels. For symmetric channels $Q_{sd,i}, Q_{sr,i}$ are given by equation (\ref{RR_Sym4}) and  $Q_{srd,l;k}$ by equation (\ref{RR_Sym5}).

For HARQ-IR,  $Q_{sd,i}, Q_{sr,i}$ and $Q_{srd,l;k}$ are given by equations (\ref{IR_Sym1}),(\ref{IR_Sym2}) and (\ref{IR_Asym3}), respectively for asymmetric channels. For symmetric channels $Q_{sd,i}, Q_{sr,i}$ are given by equation (\ref{IR_Sym4}) and  $Q_{srd,l;k}$ by equation (\ref{IR_Sym5}).

Finally, $ P_{uv,i}=1-Q_{uv,i}$ and $P_{srd,i}=1-Q_{srd,i}$.
\end{monCr}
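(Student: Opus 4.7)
The plan is to apply Theorem \ref{Thm1} by explicitly constructing the finite-state Markov chain whose states are the communication modes of Strategy II, then reading off the coefficients $\alpha_{\tilde{s}s}$ via equation (\ref{AAA}). First, I would enumerate the $L=M(N+1)$ modes in the way indicated by items (1)--(3) of the statement: for each $l\in\{1,\ldots,M\}$ reserve a ``source-sending'' mode $s'_l=1+(l-1)(N+1)$ representing the $l$-th source transmission attempt, followed by $N$ consecutive ``relay-sending'' modes $s'_l+1,\ldots,s'_l+N$ representing the $1$st through $N$-th relay transmission attempts triggered when the relay (but not the destination) has decoded after the $l$-th source attempt. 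This ordering makes it unambiguous which $k_1$ (source attempt count) or $(l,k_2)$ (pair after handoff) governs the decoding probability at a given state.

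Next, I would translate each protocol rule into a transition probability and a per-transition packet count $v\in\{0,1\}$. A fresh packet is communicated ($v=1$) precisely when the destination decodes, which occurs directly from the source in mode $s'_l$ with probability $P_{sd,l}$, or via the relay in mode $s'_l+i$ with probability $P_{srd,l;i}$; in both cases the chain returns to mode $s'_1=1$. All remaining transitions carry $v=0$: the source continues to retransmit ($s'_l\to s'_{l+1}$) with probability $Q_{sd,l}Q_{sr,l}$, the source hands off to the relay ($s'_l\to s'_l+1$) with probability $Q_{sd,l}P_{sr,l}$, and the relay continues to retransmit ($s'_l+i\to s'_l+i+1$) with probability $Q_{srd,l;i}$ for $i<N$. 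Boundary transitions ($l=M$ at the source, $i=N$ at the relay) return to mode $1$ after dropping the packet with probabilities $Q_{sd,M}Q_{sr,M}$ and $Q_{srd,l;N}$, still with $v=0$. Items (4)--(7) of the statement are exactly this case analysis, and item (8) records that all non-listed entries are zero.

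Once every transition $(\tilde{s}\to s)$ is labelled with its probability $P_{v\tilde{s}s}$ and its packet count $v\in\{0,1\}$, equation (\ref{AAA}) specializes to $\alpha_{\tilde{s}s}=P_{0\tilde{s}s}+P_{1\tilde{s}s}e^{-\theta R}$, so each entry of $\mathbf{A}$ is obtained by simply multiplying the transition probability by $1$ or by $e^{-\theta R}$ according to whether the destination decoded. Assembling these entries into the $L\times L$ matrix $\mathbf{A}$ and invoking Theorem \ref{Thm1} yields the effective capacity $C_{\text{eff}}=-\ln(\lambda_+)/\theta$. The HARQ-RR versus HARQ-IR distinction and the symmetric versus asymmetric channel distinction enter only through the choice of the closed-form for $Q_{sd,k_1},Q_{sr,k_1},Q_{srd,l;k_2}$ given by equations (\ref{RR_Sym1})--(\ref{RR_Sym5}) and (\ref{IR_Sym1})--(\ref{IR_Sym5}); the Markov structure of $\mathbf{A}$ is identical in all four cases.

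The main obstacle is bookkeeping at the boundaries rather than anything analytic. In particular, when $l=M$ or $i=N$, the transition back to mode $1$ pools two distinct events (destination decodes, contributing $v=1$; packet is dropped, contributing $v=0$) whose coefficients must be summed with the correct weights in (\ref{AAA}). One must also verify that when $M=1$ the state $s'_1=1$ has a self-loop combining $Q_{sd,1}Q_{sr,1}$ (drop, $v=0$) and $P_{sd,1}$ (success, $v=1$), recovering item (4) as a degenerate case of item (5). Once this boundary accounting is done carefully, the corollary follows by direct substitution into Theorem \ref{Thm1}.
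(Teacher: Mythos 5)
Your proposal is correct and follows essentially the same route as the paper's proof: enumerate the $M(N+1)$ communication modes with one source mode followed by $N$ relay modes per source attempt, label each transition with its probability and its packet count $v\in\{0,1\}$ (where $v=1$ exactly when the destination decodes and the chain returns to mode $1$), and read off the entries of $\mathbf{A}$ from equation (\ref{AAA}) as $\alpha_{\tilde{s}s}=P_{0\tilde{s}s}+P_{1\tilde{s}s}e^{-\theta R}$. Your explicit handling of the pooled boundary transitions at $l=M$ and $i=N$, and of the $M=1$ self-loop, matches the paper's case analysis in items (4)--(7).
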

\begin{proof}
First, for each transmission either from the source or the relay, we associate a mode. Now, the way of modes that we use is as follows. After each transmission from the source, the next $N$ modes are associate with the relay to send. Therefore, one mode for one source transmission and $N$ modes for the relay which is equivalent to $N+1$ modes. Finally, we multiply by the number of source transmissions, which is $M$, we have $M\times(N+1)$ communication modes. For (1), we use mode $s^{'}_1=1$ to denote the mode where the source sends the new packet. For (2) and (3) are the modes where the source and the relay send, respectively. For (4), since $M=1$, the source sends only one time, so either no one decodes this packet (relay or destination) with probability $Q_{sd,1}Q_{sr,1}$ therefore this packet will be dropped to send a new one, or the destination decodes correctly with probability $P_{sd,1}$, in which, we stay in mode $s^{'}_1=1$ to send a new packet. We go to mode $s=2$ only when the relay decodes correctly but not the destination with probability $Q_{sd,1}P_{sr,1}$. For (5), since the source sends $M$ times, so we stay in mode $s^{'}_1$ if a packet is correctly decoded by the destination with probability $P_{sd,1}$. If at the $l-1$-th transmission from the source either the relay or destination does not be able to decode correctly we go to the $l$-th transmission with probability $Q_{sd,l-1}Q_{sr,l-1}$ since both the relay and the destination use HARQ and they have received $l-1$ copies of the same packet.
At the last transmission from the source, we go back to mode $s^{'}_1=1$ in the case where no one decodes the transmitted packet with probability $Q_{sd,M}Q_{sd,M}$ after $M$ transmissions, or the destination decodes the packet with probability $P_{sd,M}$ after $M$ transmissions.
For (6), if only the relay decodes the packet after combining $l$ transmissions from the source, we go to mode $s^{'}_l+1$ where the relay sends, with probability $Q_{sd,l}P_{sr,l}$.
Finally for (7), if only the relay decodes the current packet after $l$ transmissions from the source, the relay starts sending. Note that the destination also has received $l$ packets from the source. So, while the number of transmissions of the relay is less than $N$, the relay sends to the destination, where this latter combines the previous $l$ packets from the source (direct link) and the packets received from the relay. At the $i$-th transmission from the relay, where $i<N$, if the destination fails to decode the transmitted packet we go to the next mode with probability $Q_{srd,l;i}$ or to mode $s^{'}_1=1$ if the destination decodes correctly with probability $P_{srd,l;i}$. At the last transmission from the relay, we go back to mode $s^{'}_1=1$ in the case where the destination fails to decode the packet correctly with probability $Q_{srd,l;N}$, or decodes correctly with probability $P_{srd,l;N}$. All other transition probabilities are zero.
\end{proof}
\begin{itemize}
\item Example 1: $M=N=1$

For $M=N=1$, we have $|S|=1\times(1+1)=2$ communication modes, $S_s=\{1\}$ and $S_r=\{2\}$. Using (2), since $M=1$, the source sends only on mode where $l=1$ which is $s^{'}_1=1$. Using (3), the relay sends on mode $s=s^{'}_1+1=2$. Using (4), we stay in mode $s^{'}_1=1$ with probability $Q_{sd,1}Q_{sr,1}$ or $P_{sd,1}$ or we go to mode $Q_{sd,1}P_{sr,1}$. Using (7), for $i=N=1$, we go from mode $s=2$ to mode $s^{'}_1=1$ with probabilities $Q_{srd,1;1}$ or $P_{srd,1;1}$.
      \begin{equation}
      \mathbf{A}=\begin{bmatrix}
        Q_{sd,1}Q_{sr,1} + P_{sd,1}e^{-\theta R}  & Q_{srd,1;1} + P_{srd,1;1}e^{-\theta R}  \\
        Q_{sd,1}P_{sr,1}  &0
      \end{bmatrix}
      \end{equation}
\item Example 2: $M=1$ and $N=2$

For $M=1$ and $N=2$, we have $|S|=1\times(2+1)=3$, $S_s=\{1\}$ and $S_r=\{2,3\}$ . Using (2), the source sends only in the case where $l=M=1$, and therefore mode $s^{'}_1=1$. Using (3), since $l$ equals only one i.e $l=1$, the relay sends in modes $s^{'}_1+1=2$ and $s^{'}_1+2=3$. Using (4), since $M=1$ we stay in mode $s^{'}_1=1$ with probilities $Q_{sd,1}Q_{sr,1}$ or $P_{sd,1}$. We go to mode $s=2$ with probability $Q_{sd,1}P_{sr,1}$. Using (7), since $l=1$, for $i=1$ we go from mode $s=2$ to mode $s=3$ with probability $Q_{srd,1;1}$ or to mode $s^{'}_1=1$ with probability $P_{srd,1;1}$. For $i=N=2$, we go from mode $s=3$ to mode $s^{'}_1=1$ with probabilities $Q_{srd,1;2}$ or $P_{srd,1;2}$. The matrix $\mathbf{A}$ is given in (\ref{A_1_2}).

      \begin{equation}\label{A_1_2}
      \mathbf{A}=
      \begin{bmatrix}
        Q_{sd,1}Q_{sr,1} + P_{sd,1}e^{-\theta R}&P_{srd,1;1}e^{-\theta R}&Q_{srd,1;2} + P_{srd,1;2}e^{-\theta R}\\
        Q_{sd,1}P_{sr,1}  &0 &0\\
        0&Q_{srd,1;1} &0
      \end{bmatrix}
      \end{equation}
\item Example 3: $M=2$ and $N=1$

For $M=2$ and $N=2$, we have $|S|=2\times(1+1)=4$, $S_s=\{1,3\}$ and $S_r=\{2,4\}$. Using (2), for $l=1$ the source sends in modes $s_1^{'}=1$ and for $l=2$ the source sends in modes $s_2^{'}=3$. Using (3), for $l=1$ the relay sends in mode $s=2$. For $l=2$ the relay sends in mode $s=4$. Using (5), we stay in mode $s_1^{'}=1$ with probability $P_{sd,1}$. For $l=M=2$, we go from $s_1^{'}=1$ to mode $s^{'}_2=3$ with probability $Q_{sd,1}Q_{sr,1}$ and mode $s_1^{'}=1$ with probability $P_{sd,1}$ (here since $M-1=1$ so it is equivalent to that when we stay in mode $s_1^{'}$). We go from mode $s^{'}_2=3$ to mode $s_1^{'}=1$ with probabilities $Q_{sd,2}Q_{sr,2}$ or $P_{sd,2}$. Using (6), for $l=1$ we go from mode $s^{'}_1=1$ to mode $s=s^{'}_1+1=2$ with probability $Q_{sd,1}P_{sr,1}$. For $l=2$, we go from mode $s^{'}_2=3$ to mode $s=s^{'}_2+1=4$ with probability $Q_{sd,2}P_{sr,2}$. Using (7), for $l=1$ and $i=N=1$ we go from  mode $s=2$ to mode $s_1^{'}=1$ with probabilities $Q_{srd,1;1}$ or $P_{srd,1;1}$. For $l=2$ and $i=N=1$ we go from mode $s=4$ to mode $s_1^{'}=1$ with probabilities $Q_{srd,2;1}$ or $P_{srd,2;1}$. The matrix $\mathbf{A}$ is given in (\ref{A_2_1}).

          \begin{equation}\label{A_2_1}
            \begin{bmatrix}
              P_{sd,1}e^{-\theta R} & Q_{srd,1;1}+P_{srd,1;1}e^{-\theta R}& Q_{sd,2}Q_{sr,2}+P_{sd,2}e^{-\theta R} & Q_{srd,2;1}+P_{srd,2;1}e^{-\theta R} \\
              Q_{sd,1}P_{sr,1} & 0 & 0 & 0  \\
              Q_{sd,1}Q_{sr,1} & 0 & 0 & 0  \\
              0& 0 & Q_{sd,2}P_{sr,2} & 0  \\
            \end{bmatrix}
          \end{equation}

\item Example 4: $M=2$ and $N=2$

For $M=N=2$ we have $|S|=2\times(2+1)=6$, $S_s=\{1,4\}$ and $S_r=\{2,3,5,6\}$. Using (2), for $l=1$ the source sends in mode $s_1^{'}=1$ and for $l=2$ the source sends in mode $s_2^{'}=4$.
 Using (3), for $l=1$ the relay sends in modes $s=2$ and $s=3$ and for $l=2$ the relay sends in modes $s=5$ and $s=6$.
 Using (5), we stay in mode $s_1^{'}=1$ with probability $P_{sd,1}$. For $l=M=2$, we go from $s_1^{'}$ to mode $s^{'}_2$ with probability $Q_{sd,1}Q_{sr,1}$ and mode $s=1$ with probability $P_{sd,1}$ (here since $M-1=1$ so it is equivalent to that when we stay in mode $s_1^{'}$). We go from mode $s^{'}_2=4$ to mode $s=1$ with probabilities $Q_{sd,2}Q_{sr,2}$ or $P_{sd,2}$.
 Using (6), for $l=1$ we go from mode $s^{'}_1=1$ to mode $s=s^{'}_1+1=2$ with probability $Q_{sd,1}P_{sr,1}$. For $l=2$ we go from mode $s^{'}_2=4$ to mode $s=s^{'}_2+1=5$ with probability $Q_{sd,2}P_{sr,2}$.
 Using (7), for $l=1$ and $i=1$ we go from mode $s=2$ to mode $s=3$ with probability $Q_{srd,1;1}$ or to mode $s^{'}_1=1$ with probability $P_{srd,1;1}$. For $i=2=N$ we go from mode $s=3$ to mode $s^{'}_1=1$ with probabilities $Q_{srd,1;2}$ or $P_{srd,1;2}$.
 For $l=2$ and $i=1$ we go from mode $s=5$ to mode $s=6$ with probability $Q_{srd,2;1}$ or to mode $s^{'}_1=1$ with probability $P_{srd,2;1}$. For $i=2=N$ we go from mode $s=6$ to mode $s^{'}_1=1$ with probabilities $Q_{srd,2;2}$ or $P_{srd,2;2}$. The matrix $\mathbf{A}$ is given in (\ref{A_2_2}).

          \begin{equation}\label{A_2_2}
          \small
            \begin{bmatrix}
              P_{sd,1}e^{-\theta R} & P_{srd,1;1}e^{-\theta R} & Q_{srd,1;2}+P_{srd,1;2}e^{-\theta R} & Q_{sd,2}Q_{sr,2}+P_{sd,2}e^{-\theta R} & P_{srd,2;1}e^{-\theta R} & Q_{srd,2;2}+P_{srd,2;2}e^{-\theta R} \\
              Q_{sd,1}P_{sr,1} & 0 & 0 & 0 & 0 & 0 \\
              0 & Q_{srd,1;1} & 0 & 0 & 0 & 0 \\
              Q_{sd,1}Q_{sr,1} & 0 & 0 & 0 & 0 & 0 \\
              0 & 0 & 0 & Q_{sd,2}P_{sr,2} & 0 & 0 \\
              0 & 0 & 0 & 0 & Q_{srd,2;1} & 0
              \end{bmatrix}
          \end{equation}
 \end{itemize}         
\normalsize
\section{Application over Rayleigh fading channel}
For channel coefficient $h_{uv}$, we consider a Rayleigh flat fading channel where the gain $g_{uv}=|h_{uv}|^2$, is exponentially distributed with parameter $\Delta_{uv}=(1/\delta_{uv}^2)$ i.e $g_{uv}\sim E(1/\delta_{uv})$. Also we note that $\Gamma_{uv}g_{uv}$ is exponential random variable with parameter $1/(\Gamma_{uv}\delta_{uv}^2)$.
\\
\vspace{-0.8cm}
\subsection{Outage Probability of Strategy I}
Since in strategy I we use ARQ, all the transmissions are independent, we have
\begin{align}
Q_{sd}=&\mathbb{P}\left\{\log_2\left(1+\Gamma_{sd}g_{sd,i}\right)\leq R\right\}\nonumber  \\
      =&\mathbb{P}\left\{\Gamma_{sd}g_{sd,i}\leq 2^R-1\right\}\nonumber\\ \label{Ou_ARQ}
      =&1-e^{\Theta_{sd}}
\end{align}
where $\Theta_{sd}=\frac{2^R-1}{\Gamma_{sd}\delta_{sd}^2}$. Similarly, $Q_{sr}$ and $Q_{rd}$ are given by (\ref{Ou_ARQ}) where $\Theta_{sr}=\frac{2^R-1}{\Gamma_{sr}\delta_{sr}^2}$ and $\Theta_{rd}=\frac{2^R-1}{\Gamma_{rd}\delta_{rd}^2}$.
For symmetric channels, we have $Q_{uv}=\frac{2^R-1}{\Gamma\delta}$ where $1/\delta$ is the parameter of all channels.
\subsection{Outage Probabilities of  Strategy II using HARQ-RR}
\begin{monCr}\label{Lemma1}
Let $X_i$ be a exponential random variable with parameter $\mu_i$. Let $Y_i$ be a random variable defined by the sum of $k_i$ exponential random variables $X_j$ with the same parameter $\mu_i$, i.e $X_j\sim E(\mu_i)$ for $j\in \{1,2,\cdots,k_i\}$
\begin{equation}\label{Lem1}
  Y_i=\sum_{j=1}^{k_i}X_j.
\end{equation}
The random variable $Y_i$ has an Erlang distribution $Erl(k_i,\mu_i)$ with pdf $f_{Y_i}(t)$ and cdf $F_{Y_i}(t)$ given by
\begin{align}\label{Lem2}
    f_{Y_i}(t)&=\frac{\mu_i^{k_i}t^{k_i-1}}{(k_i-1)!}e^{-\mu_it}\\ \label{Lem3}
    F_{Y_i}(t)&=\frac{1}{(k_i-1)!}\gamma(k_i,t\mu_i)
\end{align}
for $t>0$. Let $Z$ be a random variable defined by the sum of two Erlang random variables $Y_1\sim Erl(k_1,\mu_1)$ and $Y_2\sim Erl(k_2,\mu_2)$, $Z=Y_1+Y_2$.
The pdf and cdf of $Z$ are $f_Z(t)$ and $F_Z(t)$, respectively given by
\begin{align}
  f_Z(t)=&\sum_{i=1}^{2}\mu_i^{k_i}e^{-\mu_it}\sum_{j=1}^{k_i}\frac{(-1)^{k_i-j}}{(j-1)!}t^{j-1}\times\sum_{\begin{matrix}
  n_1+n_2=k_i-j \\
  n_i=0
\end{matrix}}^{}
\prod_{\begin{matrix}
  l=1 \\
  l\neq i
\end{matrix}}^{2}
\binom{k_l+n_l-1}{n_l}\frac
{\mu_l^{k_l}}
{(\mu_l-\mu_i)^{k_l+n_l}}\\ \label{Lem5}
F_Z(t)=&\sum_{i=1}^{2}\mu_i^{k_i}\sum_{j=1}^{k_i}\frac{(-1)^{k_i-j}}{(j-1)!}
\mu_i^{-j}\gamma(j,t\mu_i)
   \times\sum_{\begin{matrix}
  n_1+n_2=k_i-j \\
  n_i=0
\end{matrix}}^{}
\prod_{\begin{matrix}
  l=1\\
  l\neq i
\end{matrix}}^{2}
\binom{k_l+n_l-1}{n_l}\frac
{\mu_l^{k_l}}
{(\mu_l-\mu_i)^{k_l+n_l}}
\end{align}
\end{monCr}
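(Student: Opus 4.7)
For the first claim, I would derive the pdf of $Y_i=\sum_{j=1}^{k_i}X_j$ by induction on $k_i$: the base case $k_i=1$ is exponential; in the inductive step, convolving an $Erl(k-1,\mu_i)$ density with one more $E(\mu_i)$ density gives
\begin{equation*}
f(t)=\int_0^t\frac{\mu_i^{k-1}u^{k-2}e^{-\mu_iu}}{(k-2)!}\cdot\mu_ie^{-\mu_i(t-u)}\,du=\frac{\mu_i^{k}t^{k-1}e^{-\mu_it}}{(k-1)!},
\end{equation*}
using $\int_0^tu^{k-2}\,du=t^{k-1}/(k-1)$. Integrating $f_{Y_i}$ via the substitution $v=\mu_iu$ converts $\int_0^tu^{k_i-1}e^{-\mu_iu}\,du$ into $\mu_i^{-k_i}\gamma(k_i,t\mu_i)$, producing the stated cdf.

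For $Z=Y_1+Y_2$, independence gives $M_Z(s)=E[e^{sZ}]=\mu_1^{k_1}\mu_2^{k_2}/[(\mu_1-s)^{k_1}(\mu_2-s)^{k_2}]$ for $s<\min(\mu_1,\mu_2)$. The plan is a partial-fraction decomposition
\begin{equation*}
\frac{1}{(\mu_1-s)^{k_1}(\mu_2-s)^{k_2}}=\sum_{i=1}^{2}\sum_{j=1}^{k_i}\frac{B_{i,j}}{(\mu_i-s)^j},
\end{equation*}
in which $B_{i,j}$ is the coefficient of $(\mu_i-s)^{k_i-j}$ in the Taylor expansion of $\prod_{l\neq i}(\mu_l-s)^{-k_l}$ about $s=\mu_i$.

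To carry out that expansion I would write each factor as $\mu_l-s=(\mu_l-\mu_i)+(\mu_i-s)$ and apply the generalized binomial series $(1+x)^{-k_l}=\sum_{n\ge 0}(-1)^n\binom{k_l+n-1}{n}x^n$. Collecting terms whose total $(\mu_i-s)$-exponent equals $k_i-j$ by a multinomial split $\sum_{l\neq i}n_l=k_i-j$ with $n_i=0$ yields
\begin{equation*}
B_{i,j}=(-1)^{k_i-j}\sum_{\substack{n_1+n_2=k_i-j\\ n_i=0}}\prod_{\substack{l=1\\ l\neq i}}^{2}\binom{k_l+n_l-1}{n_l}\frac{1}{(\mu_l-\mu_i)^{k_l+n_l}},
\end{equation*}
the sign $(-1)^{k_i-j}$ arising as $\prod_{l\neq i}(-1)^{n_l}=(-1)^{\sum_{l\neq i}n_l}$. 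For the two-Erlang case the constraint collapses to a single term, but writing it as a multinomial sum keeps the argument extendable to more than two components.

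Finally, term-by-term inversion using $(\mu_i-s)^{-j}\leftrightarrow t^{j-1}e^{-\mu_it}/(j-1)!$ reconstitutes $f_Z$; I would then distribute the prefactor $\mu_1^{k_1}\mu_2^{k_2}$ so that $\mu_i^{k_i}$ is pulled outside the outer sum and $\mu_l^{k_l}$ is absorbed into the product, exactly matching the statement. Integrating each summand $t^{j-1}e^{-\mu_it}/(j-1)!$ over $[0,t]$ with the substitution $v=\mu_iu$ replaces it by $\mu_i^{-j}\gamma(j,t\mu_i)/(j-1)!$, giving the claimed cdf. I expect the main obstacle to be the partial-fraction coefficient extraction---specifically keeping the signs straight and correctly identifying the index set $\{n_1+n_2=k_i-j,\;n_i=0\}$ that appears once one commits to the $(\mu_i-s)$-variable expansion; everything else is bookkeeping.
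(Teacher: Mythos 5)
Your proof is correct, but it is considerably more self-contained than the paper's. The paper's appendix does not actually derive either pdf: the Erlang density is quoted from the cited works on convolutions of exponential/Erlang distributions, the density of $Z=Y_1+Y_2$ is likewise written down without derivation, and the only computations performed are the two integrations $\int_0^t x^{j-1}e^{-\mu_i x}\,dx=\mu_i^{-j}\gamma(j,t\mu_i)$ (via a tabulated integral) that convert the pdfs into the stated cdfs --- exactly the step you handle with the substitution $v=\mu_i u$. You instead prove the quoted formulas: the Erlang pdf by induction on the convolution, and the two-Erlang convolution by factoring the moment generating function $\mu_1^{k_1}\mu_2^{k_2}(\mu_1-s)^{-k_1}(\mu_2-s)^{-k_2}$, extracting the partial-fraction coefficients $B_{i,j}$ through the generalized binomial expansion of $\prod_{l\neq i}(\mu_l-s)^{-k_l}$ about $s=\mu_i$, and inverting term by term; your coefficient $(-1)^{k_i-j}\sum_{n_1+n_2=k_i-j,\,n_i=0}\prod_{l\neq i}\binom{k_l+n_l-1}{n_l}(\mu_l-\mu_i)^{-(k_l+n_l)}$ reproduces the statement exactly once the prefactor $\mu_1^{k_1}\mu_2^{k_2}=\mu_i^{k_i}\prod_{l\neq i}\mu_l^{k_l}$ is distributed as you describe. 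What your route buys is an actual verification of the hardest formula rather than a citation, plus (as you note) an argument that extends verbatim to sums of more than two Erlang blocks; what it costs is the need to invoke uniqueness of the Laplace/moment-generating transform to justify the term-by-term inversion, and the standing assumptions $\mu_1\neq\mu_2$ and independence of all summands, which you use implicitly and should state explicitly.
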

\begin{proof}
  See Appendix \ref{App2}
\end{proof}
First, we consider HARQ-RR over asymmetric channels. The probability that the destination fails to decode packets at the $k_1$-th transmission from the source is given by
\begin{align}\label{HARQRR}
  Q_{sd,k_1}&=\mathbb{P} \left\{\log_2\left(1+
  \sum_{i=1}^{k_1}\Gamma_{sd}g_{sd,i}\right)
  \leq R\right\}\nonumber\\
  &=\mathbb{P}\left\{\sum_{i=1}^{k_1}\Gamma_{sd}g_{sd,i}\leq \Theta\right\}\nonumber\\
  &\stackrel{(a)}{=}\mathbb{P}\left\{Y_{sd}\leq \Theta\right\}\nonumber\\
  &\stackrel{(b)}{=}\int_{0}^{\Theta}f_{Y_{sd}}(t) dt\nonumber\\
  &\stackrel{(c)}{=}F_Y{_{sd}}(\Theta)\nonumber\\
  &=\frac{1}{(k_1-1)!}\gamma(k_1,\frac{\Theta}{\Gamma_{sd}\delta^2_{sd}})
\end{align}
where $\Theta=2^R-1$ and in step $(a)$ we define $Y_{sd}=\sum_{i=1}^{k}\Gamma_{sd}h_{sd,i}$. Since $\Gamma_{uv}g_{uv,i}\sim E(1/\Gamma_{uv}\delta^2_{uv})$. Therefore in step $(b)$ by using Corollary \ref{Lemma1}, $Y_{sd}$ is a random variable following Erlang distribution,  i.e
$Y_{sd}\sim Erl(k_1,1/\Gamma_{uv}\delta^2_{uv})$. Finally in step $(c)$, we use cdf given in (\ref{Lem3}). Similarly, for $Q_{sr,k_1}$ we have
\begin{equation}
    Q_{sr,k_1}=\frac{1}{(k_1-1)!}\gamma(k_1,\frac{\Theta}{\Gamma_{sr}\delta^2_{sr}}).
\end{equation}
In that case only the relay decodes the transmitted packet after $k_1$ transmissions from the source and the relay sends $k_2$ times to the destination. The probability that the destination fails to decode the transmitted packet correctly after $k_1$ transmissions from the source and $k_2$ transmissions from the relay is
\begin{align}
  Q_{srd,k_1;k_2}&=\mathbb{P}\left\{\log_2\left(1+
  \sum_{i=1}^{k_1}\Gamma_{sd}g_{sd,i}
  +
  \sum_{i=1}^{k_2}\Gamma_{rd}g_{rd,i}\right)
  \leq R\right\}\nonumber\\ \nonumber
  &=\mathbb{P}\left\{\sum_{i=1}^{k_1}\Gamma_{sd}g_{sd,i}
  +
  \sum_{i=1}^{k_2}\Gamma_{rd}g_{rd,i}
  \leq \Theta\right\}\\ \nonumber
  &\stackrel{(a)}{=}\mathbb{P}\{Z\leq \Theta\}\nonumber\\
  &\stackrel{(b)}{=}\int_{0}^{\Theta}f_Z(t) dt\nonumber\\
  &\stackrel{(c)}{=}F_Z(\Theta)\nonumber\\
  &=\sum_{i=1}^{2}\mu_i^{k_i}\sum_{j=1}^{k_i}\frac{(-1)^{k_i-j}}{(j-1)!}
\mu_i^{-j}\gamma(j,t\mu_i)
   \sum_{\begin{matrix}
  n_1+n_2=k_i-j \\
  n_i=0
\end{matrix}}^{}
\prod_{\begin{matrix}
  l=1\\
  l\neq i
\end{matrix}}^{2}
\binom{k_l+n_l-1}{n_l}\frac
{\mu_l^{k_l}}
{(\mu_l-\mu_i)^{k_l+n_l}}
\end{align}
where $ \Theta=2^R-1$ and in step $(a)$ we define  $Z=\sum_{i=1}^{k_1}\Gamma_{sd}g_{sd,i}+\sum_{i=1}^{k_2}\Gamma_{rd}g_{rd,i}$. In step $(b)$ using  Corollary \ref{Lemma1}, is clear that $Z$ is the sum of two random variables with Erlang distribution i.e $\sum_{i=1}^{k_1}\Gamma_{sd}g_{sd,i}\sim Erl(k_1,\mu_1)$ and $\sum_{i=1}^{k_2}\Gamma_{rd}g_{rd,i}\sim Erl(k_1,\mu_2)$, where $\mu_1=1/\Gamma_{sd}\delta^2_{sd}$ and $\mu_2=1/\Gamma_{rd}\delta^2_{rd}$. Finally in step
(c) we use cdf given in (\ref{Lem5}). In the case when the channels are symmetric i.e the same transmit power $P$, noise variance $N$ and channel parameter $1/\delta^2$ we use cdf given in (\ref{Lem3}) to get $Q_{sd,k_1}$ and $Q_{srd,k_1;k_2}$ which are given (\ref{Farss1}) and (\ref{Farss2}) by and $Q_{sr,k_1}=Q_{sd,k_1}$.
\begin{align}\label{Farss1}
  Q_{sd,k_1}=&\frac{1}{(k_1-1)!}\gamma(k_1,\frac{N\Theta}{P\delta^2})\\ \label{Farss2}
  Q_{srd,k_1;k_2}=&\frac{1}{(k_1+k_2-1)!}\gamma(k_1+k_2,\frac{N\Theta}{P\delta^2}).
\end{align}

\subsection{Outage Probability of Strategy II using HARQ-IR}
In \cite{yilmaz_product_2009} using a generalization of the upper incomplete Fox’s H function, an exact pdf and cdf of the product of
shifted exponential variables are proposed. In \cite{makki_performance_2016} for  Rayleigh fading channels, the outage probabilities are derived for both HARQ-RR and IR \footnote{In \cite{makki_performance_2016}, the scheme model is
the total number of retranslations for both source and relay was $m$ and in our study, each one has its number of retransmissions.
In \cite[Theorem 3]{makki_performance_2016}, the outage probabilities for both HARQ-RR and IR are derived. For HARQ-RR, a Laplace
transform is used and the derived expression is similar to the sum of exponential random variables with different parameters. For HARQ-IR the derivation was a direct consequence from \cite{yilmaz_product_2009}.
The difference is that first, the noise variance is assumed to be one in all channels, and in our work, we take into account the noise of all channels. Also, a different transmit power is assumed at each transmission and we use the same power in all retransmissions from the source and the same power in all retransmissions from the relay.}

\begin{monCr}\label{Lemma2}
Let $H_i$ be a shifted exponential random variable with parameter $\mu_i$ and the shifted parameter $\alpha_i$, i.e, $ShE(\mu_i,\alpha_i)$. Let $Z_i$ be a random variable defined by the product of $k_i$ shifted
exponential random variables $H_j$ with the same
parameter $\mu_i$ and the shifted parameter $\alpha_i$, i.e,
$H_j\sim ShE(\mu_i,\alpha_i)$ for $j\in\{1,2,\cdots,k_i\}$ and $Z_i$ defined by
\begin{equation}
  Z_i=\prod_{j=1}^{k_i}H_j
\end{equation}
The pdf and cdf of $Z_i$ are $f_{Z_i}(z)$ and $F_{Z_i}(z)$, respectively given by
\begin{align}\label{Product_pdf}
  f_{Z_i}(z)&=\mu_i^{k_i}e^{k_i\alpha_i \mu_i}\times \large{H}^{k_i,0}_{0,k_i}\left( \mu_i^{k_i}z \left| \begin{array}{cc}- \\ \Psi_1\cdots \Psi_{k_i} \end{array} \right. \right)\\ \label{Product_cdf}
    F_{Z_i}(z)&=e^{k_i\alpha_i \mu_i}\times\large{H}^{k_i,1}_{1,k_i+1}\left( \mu_i^{k_i}z \left| \begin{array}{cc}(1,1,0) \\ \Psi_1,\cdots ,\Psi_{k_i},(0,1,0) \end{array} \right. \right)
\end{align}
where $\large{H}^{m,n}_{p,q}$ is the generalized upper incomplete Fox’s H function and $\Psi$ defined as
\begin{equation}\label{Psi}
  \Psi_l=(0,1,\alpha_i \mu_i) \emph{ for } l=1,\cdots,k_i.
\end{equation}
Let $Z_1=\prod_{n=1}^{k_1}H_n$ be the product of the $k_1$ shifted exponential random variables $H_n$ with the same parameter $\mu_1$  and the shifted parameter $\alpha_1$, i.e $H_n\sim ShE(\mu_1,\alpha_1), \emph{ for }all~n\in\{1,2,\cdots,k_1\}$. Similarly, let $Z_2=\prod_{m=1}^{k_2}H_m$ where $H_m\sim ShE(\mu_2,\alpha_2), \emph{ for } all ~ m\in\{1,2,\cdots,k_2\}$. Let $Z$ be a random variable defined by

\begin{equation}
  Z=Z_1\times Z_2=\prod_{n=1}^{k_1}H_n\prod_{m=1}^{k_2}H_m
\end{equation}

The pdf and cdf of $Z$ are $f_{Z}(z)$ and $F_{Z}(z)$ given by (\ref{pdf_Prod_shift_Diff}) and (\ref{cdf_Prod_shift_Diff}), respectively.
where
\begin{align*}
  \Psi_l=&(0,1,\alpha_1 \mu_1) \emph{ for } l=1,\cdots,k_1,  \\
  \Phi_l=&(0,1,\alpha_2 \mu_2) \emph{ for } l=1,\cdots,k_2.
\end{align*}

\begin{align}\label{pdf_Prod_shift_Diff}
  f_{Z}(z)&=\mu_1^{k_1} \mu_2^{k_2} e^{k_1\alpha_1 \mu_1+k_2\alpha_2 \mu_2}\large{H}^{k_1+k_2,0}_{0,k_1+k_2}\left(\mu_1^{k_1} \mu_2^{k_2}z \left| \begin{array}{cc}- \\ \Psi_1\cdots \Psi_{k_1},\Phi_{1}\cdots \Phi_{k_2} \end{array} \right. \right)\\ \label{cdf_Prod_shift_Diff}
    F_{Z}(z)&= e^{k_1\alpha_1 \mu_1+k_2\alpha_2 \mu_2}
   \large{H}^{k_1+k_2,1}_{1,k_1+k_2+1}\left( \mu_1^{k_1} \mu_2^{k_2}z \left| \begin{array}{cc}(1,1,0) \\ \Psi_1\cdots \Psi_{k_1},\Phi_{1}\cdots \Phi_{k_2},(0,1,0) \end{array} \right. \right)
\end{align}
\end{monCr}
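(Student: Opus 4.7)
The natural route is via the Mellin transform, since the Mellin transform diagonalizes products of independent positive random variables, and the generalized upper incomplete Fox's $H$ function is defined precisely as the inverse Mellin transform of a ratio of incomplete Gamma-type factors.

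First I would compute the Mellin transform of a single shifted exponential. If $H \sim ShE(\mu,\alpha)$ has pdf $f_H(t) = \mu\, e^{-\mu(t-\alpha)}$ for $t \geq \alpha$, then after the substitution $u = \mu t$ one obtains
\begin{equation*}
  \mathcal{M}_s\{f_H(t)\}(s) \;=\; \mu\, e^{\mu\alpha}\int_{\alpha}^{\infty} t^{s-1}\, e^{-\mu t}\, dt \;=\; \mu^{1-s}\, e^{\mu\alpha}\, \Gamma(s,\mu\alpha),
\end{equation*}
the key feature being that the \emph{upper} incomplete gamma $\Gamma(s,\mu\alpha)$ appears rather than the ordinary $\Gamma(s)$, and this is exactly the building block that the generalized (upper incomplete) Fox's $H$ of \cite{chelli_performance_2014} is designed to encode through a row of the form $(0,1,\alpha\mu)$.

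Next, since the $k_i$ factors in $Z_i = \prod_{j=1}^{k_i} H_j$ are independent and identically distributed, the Mellin transform of $f_{Z_i}$ is the $k_i$-th power of the expression above, namely $\mu_i^{k_i(1-s)}\, e^{k_i\alpha_i\mu_i}\, \Gamma(s,\mu_i\alpha_i)^{k_i}$. Applying the inverse Mellin transform (see \cite[2.2]{mathai_h-function:_2010}) and rescaling $s \mapsto s$ inside the Bromwich contour to absorb the $\mu_i^{k_i(1-s)}$ prefactor into the argument $\mu_i^{k_i} z$, the Mellin-Barnes integral that appears is precisely the defining integral of $H^{k_i,0}_{0,k_i}$ with all $k_i$ lower rows equal to $\Psi_l=(0,1,\alpha_i\mu_i)$, giving (\ref{Product_pdf}). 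For the cdf, I would use $F_{Z_i}(z) = \int_0^z f_{Z_i}(t)\,dt$; term-by-term integration under the contour multiplies the Mellin integrand by $1/s$, which in the $H$-function dictionary corresponds to adjoining the pair $(1,1,0)$ to the upper row and $(0,1,0)$ to the lower row, yielding (\ref{Product_cdf}) with the indices bumped to $H^{k_i,1}_{1,k_i+1}$.

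For the second part, since $Z_1$ and $Z_2$ are independent, the Mellin transform of $f_Z$ factorizes as the product of the two Mellin transforms already obtained, producing $\mu_1^{k_1(1-s)}\mu_2^{k_2(1-s)}\,e^{k_1\alpha_1\mu_1 + k_2\alpha_2\mu_2}\,\Gamma(s,\mu_1\alpha_1)^{k_1}\Gamma(s,\mu_2\alpha_2)^{k_2}$. Inverting the Mellin transform with argument $\mu_1^{k_1}\mu_2^{k_2} z$ concatenates the $\Psi$ and $\Phi$ rows in the resulting Fox's $H$, giving (\ref{pdf_Prod_shift_Diff}); the same integration trick $\int_0^z(\cdot)\,dt$ then produces (\ref{cdf_Prod_shift_Diff}).

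The main obstacle is not the Mellin calculus per se but the careful matching of the Mellin-Barnes contour integrand to the triple-argument parameter rows $(a,b,c)$ of the generalized upper incomplete $H$ of \cite{chelli_performance_2014}: one must verify that each factor $\Gamma(s,\mu_i\alpha_i)$ corresponds to a lower row $(0,1,\alpha_i\mu_i)$ and that the cdf integration introduces exactly the pair $(1,1,0)$/$(0,1,0)$ without disturbing the contour of convergence. Once this dictionary is pinned down, both identities follow essentially by inspection, and the scalar prefactors $e^{k_i\alpha_i\mu_i}$ are simply the exponential factors pulled out of the Mellin transform of each shifted exponential. The detailed verification can be placed in an appendix parallel to Appendix~\ref{App2}.
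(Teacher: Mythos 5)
Your proposal follows essentially the same route as the paper's Appendix B: compute the Mellin transform of a single shifted exponential (yielding $\mu^{1-s}e^{\alpha\mu}\Gamma(s,\alpha\mu)$), use the product rule for independent positive factors, invert the Mellin transform and identify the resulting Mellin--Barnes integral with the generalized upper incomplete Fox's $H$ function, and obtain the cdf by integrating the contour representation term by term. The only detail to correct is that $\int_0^z t^{-s}\,dt$ contributes a factor $1/(1-s)$ together with a shift of the power of $z$ --- which the paper handles by writing $1-s=\Gamma(2-s)/\Gamma(1-s)$ and substituting $h=s-1$ --- rather than a factor $1/s$, but this does not alter the argument.
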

\begin{proof}
  See Appendix \ref{App3}
\end{proof}
For HARQ-IR over asymmetric channels. The probability that the destination fails to decode a packet at the $k_1$-th transmission from the source is given
\begin{align}
  Q_{sd,k_1}&=\mathbb{P}\left\{\sum_{i=1}^{k_1}\log_2\left(1+\Gamma_{sd}g_{sd,i}\right)\leq R\right\}\nonumber \\
  &=\mathbb{P}\left\{\log_2\left(\prod_{i=1}^{k_1}(1+\Gamma_{sd}g_{sd,i})\leq R\right)\right\}\nonumber\\
  &=\mathbb{P}\left\{\prod_{i=1}^{k_1}(1+\Gamma_{sd}g_{sd,i})\leq 2^R\right\}\nonumber\\
  &\stackrel{(a)}{=}\mathbb{P}\left\{Y_{sd}\leq 2^R\right\}\nonumber\\
  &\stackrel{(b)}{=}\int_{0}^{2^R}f_{Y_{sd}}(t) dt\nonumber\\
  &\stackrel{(c)}{=}F_Y{_{sd}}(2^R)\nonumber\\ \label{Qsdk11}
&=e^{\frac{k_1}{\Gamma_{sd}\delta^2_{sd}}}\times\large{H}^{k_1,1}_{1,k_1+1}\left( \left(\frac{1}{\Gamma_{sd}\delta^2_{sd}}
\right)^{k_1}2^R \left| \begin{array}{cc}(1,1,0) \\ (1,1, \frac{1}{\Gamma_{sd}\delta^2_{sd}}
)_1,\cdots ,(1,1, \frac{1}{\Gamma_{sd}\delta^2_{sd}}
)_{k_1},(0,1,0) \end{array} \right. \right)
\end{align}
\begin{align}\label{Qsdk22}
Q_{sr,k_1}&=e^{\frac{k_1}{\Gamma_{sr}\delta^2_{sr}}}\times\large{H}^{k_1,1}_{1,k_1+1}\left( \left(\frac{1}{\Gamma_{sr}\delta^2_{sr}}
\right)^{k_1}2^R \left| \begin{array}{cc}(1,1,0) \\ (1,1, \frac{1}{\Gamma_{sr}\delta^2_{sr}}
)_1,\cdots ,(1,1, \frac{1}{\Gamma_{sr}\delta^2_{sr}}
)_{k_1},(0,1,0) \end{array} \right. \right)
\end{align}
where in step $(a)$ we define $Y_{sd}=\prod_{i=1}^{k_1}(1+\Gamma_{sd}g_{sd,i})$. Since $\Gamma_{uv}g_{uv,i}\sim E(1/\Gamma_{uv}\delta^2_{uv})$. Therefore in step $(b)$ by using Corollary \ref{Lemma2}, $Y_{sd}$ is random variable define by the product of  $k_1$ shifted exponential random variable $H_i=(1+\Gamma_{sd}g_{sd,i})$  with the same parameter $1/\Gamma_{sd}\delta^2_{sd}$ and shifted parameter $\alpha=1$, i.e $H_i\sim ShE(1/\Gamma_{sd}\delta^2_{sd},1)$. Finally in step $(c)$, we use cdf given in (\ref{Product_cdf}) we get (\ref{Qsdk11}). Similarly, for $Q_{sr,k_1}$ is given by (\ref{Qsdk22}). In the case that only the relay decodes the transmitted packet after $k_1$ transmissions from the source and the relay sends $k_2$ times to the destination. The probability that the destination fails to decode the transmitted packet correctly after $k_1$ transmissions from the source and $k_2$ transmissions from the relay is
\begin{align}
Q_{srd,k_1;k_2}=&\mathbb{P}\Bigg\{\sum_{i=1}^{k_1}\log_2\left(1+\Gamma_{sd}g_{sd,i}\right)+\sum_{i=1}^{k_2}\log_2\left(1+\Gamma_{rd}g_{rd,i}\right)\leq R\Bigg\}\nonumber\\
               =&\mathbb{P}\Bigg\{
  \log_2\Bigg( \prod_{i=1}^{k_1}\left(1+\Gamma_{sd}g_{sd,i}\right)\prod_{i=1}^{k_2}(1+\Gamma_{rd}g_{rd,i})\Bigg)\leq R
  \Bigg\}\nonumber\\
  \stackrel{(a)}{=}&\mathbb{P}\left\{ Y_{sd}Y_{rd}\leq 2^R\right\}\nonumber\\
  \stackrel{(b)}{=}&\mathbb{P}\left\{Z\leq 2^R\right\}\nonumber\\
  \stackrel{(c)}{=}&\int_{0}^{2^R}f_Z(t) dt\nonumber\\
  \stackrel{(d)}{=}&F_Z(2^R)\nonumber\\ \label{Qsrdk11k22}
    =&e^{\frac{k_1}{\Gamma_{sd}\delta^2_{sd}}+\frac{k_2}{\Gamma_{rd}\delta^2_{rd}}}
   \large{H}^{k_1+k_2,1}_{1,k_1+k_2+1}
   \left( \left(\frac{1}{\Gamma_{sd}\delta^2_{sd}}\right)^{k_1} \left(\frac{1}{\Gamma_{rd}\delta^2_{rd}}\right)^{k_2}2^R \left|
   \footnotesize
   \begin{array}{cc}(1,1,0) \\ \Psi_1\cdots \Psi_{k_1},\Phi_{1}\cdots \Phi_{k_2},(0,1,0) \end{array} \right. \right)
\end{align}
where in step $(a)$,  we define $Y_{sd}=\prod_{i=1}^{k_1}(1+\Gamma_{sd}g_{sd,i})$ and $Y_{rd}=\prod_{i=1}^{k_2}(1+\Gamma_{rd}g_{rd,i})$ and in step $(b)$ we have $Z=Y_{sd}Y_{rd}$. Using Corollary \ref{Lemma2}, it is clear that $Z=Y_{sd}Y_{rd}$ is the product of two random variables where each one is defined by the product of $k_1$ and $k_2$ shifted exponential random variables, i.e $Y_{sd}$ is the product of $k_1$ of shifted random variables
$(1+\Gamma_{sd}g_{sd,i})\sim ShE(\frac{1}{\Gamma_{sd}\delta_{sd}^2},1)$ and $Y_{rd}$ is the product of $k_2$ of shifted random variables
$(1+\Gamma_{rd}g_{rd,i})\sim ShE(\frac{1}{\Gamma_{rd}\delta_{rd}^2},1)$. Therefore in step $(c)$, we use (\ref{pdf_Prod_shift_Diff}) and in step $(d)$ we use (\ref{cdf_Prod_shift_Diff}) we get (\ref{Qsrdk11k22}) and we have
\begin{align}
  \Psi_l=&(1,1,\frac{1}{\Gamma_{sd}\delta^2_{sd}}) \text{ for } l=1,\cdots,k_1,  \\
  \Phi_l=&(1,1,\frac{1}{\Gamma_{rd}\delta^2_{rd}}) \text{ for } l=1,\cdots,k_2.
\end{align}

In the case when channels are symmetric, i.e the same transmit power $P$, noise variance $N$ and channel parameter $1/\delta^2$ we use cdf given in (\ref{Product_cdf}) to get $Q_{sd,k_1}$ and $Q_{srd,k_1;k_2}$ which are given by (\ref{Qsdsrsrd1}) and (\ref{Qsdsrsrd2}), respectively and $Q_{sr,k_1}=Q_{sd,k_1}$.
\begin{align}\label{Qsdsrsrd1}
 Q_{sd,k_1} &=e^{\frac{k_1N}{P\delta^2}}\times\large{H}^{k_1,1}_{1,k_1+1}\left( \left(\frac{N}{P\delta^2}
\right)^{k_1}2^R \left| \begin{array}{cc}(1,1,0) \\ (1,1, \frac{N}{P\delta^2}
)_1,\cdots ,(1,1, \frac{N}{P\delta^2}
)_{k_1},(0,1,0) \end{array} \right. \right)\\
Q_{srd,k_1;k_2}&= e^{\frac{N(k_1+K_2)}{P\delta^2}} \nonumber\\ \label{Qsdsrsrd2}
  \times&
   \large{H}^{k_1+k_2,1}_{1,k_1+k_2+1}
   \left( \left(\frac{N}{P\delta^2}\right)^{k_1+k_2}2^R \left| \begin{array}{cc}(1,1,0) \\ (1,1,\frac{N}{P\delta^2})_1\cdots (1,1, \frac{N}{P\delta^2})_{k_1+k_2},(0,1,0) \end{array} \right. \right)
\end{align}
\section{Simulations and Results}
\subsection{Outage Probabilities}
In this section, the analytical expressions for the outage probability versus SNR $\Gamma$, for both HARQ-RR and IR
are evaluated numerically and validated by Monte Carlo simulations in Fig \ref{Fig_Outage} (a) for RR and Fig \ref{Fig_Outage} (b) for IR. Both the source and the relay are assumed to transmit with the same power equals one.
Simulation and numerical results are in perfect agreement. Moreover, as expected, increasing the number of retransmissions improves the
outage probability. For the case of asymmetric channels, we can see that when the SNR of the source-destination or the relay-destination channels increases the performance improves and vice versa. From Fig \ref{Fig_Outage} (b), we can see that IR outperforms RR and gives better performance.
\begin{figure*}[t]
\minipage{0.5\textwidth}
  \includegraphics[width=1\linewidth]{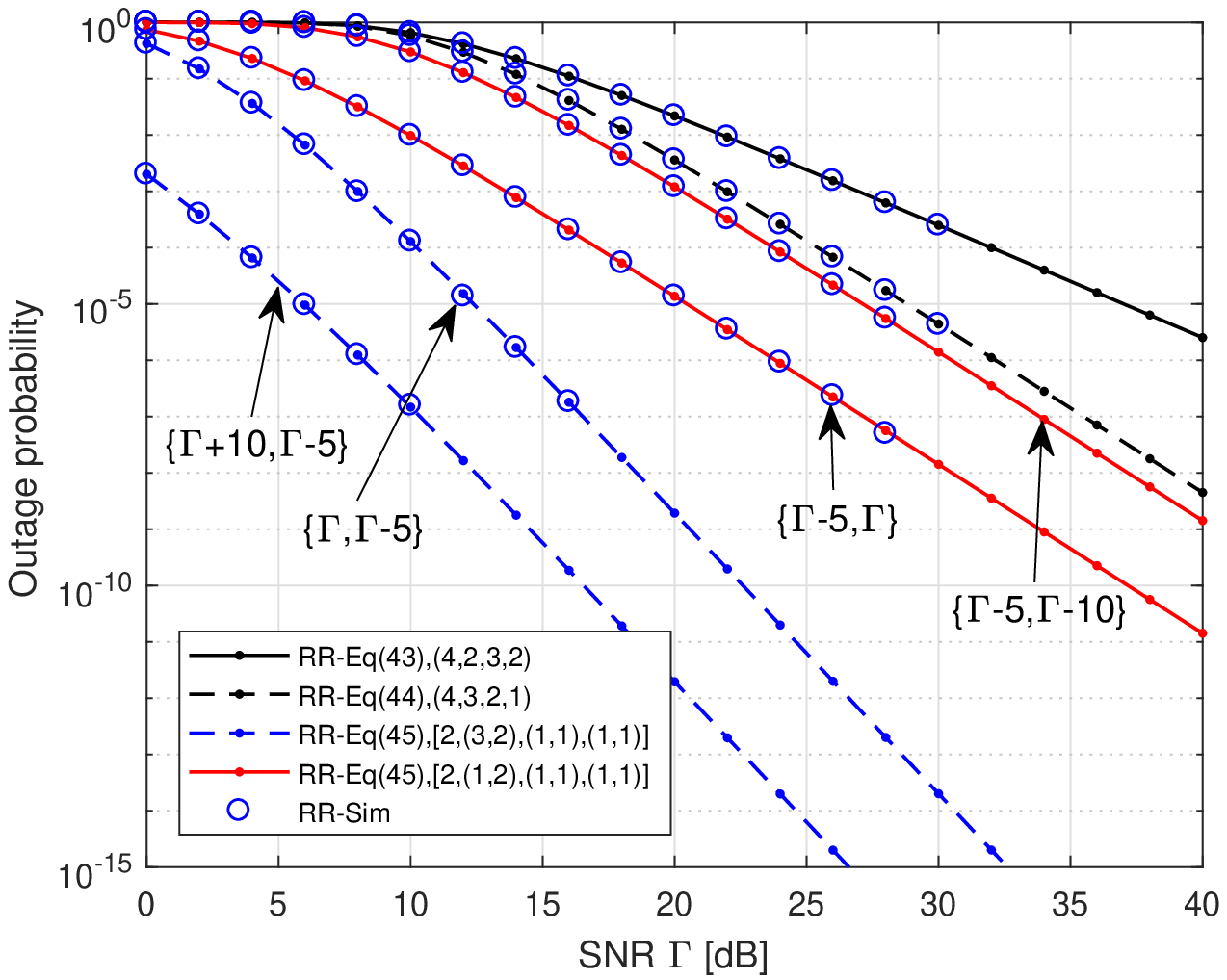}
\caption*{(a)}\label{Fig_outage_RR}
\endminipage
\hfill
\minipage{0.5\textwidth}%
  \includegraphics[width=1\linewidth]{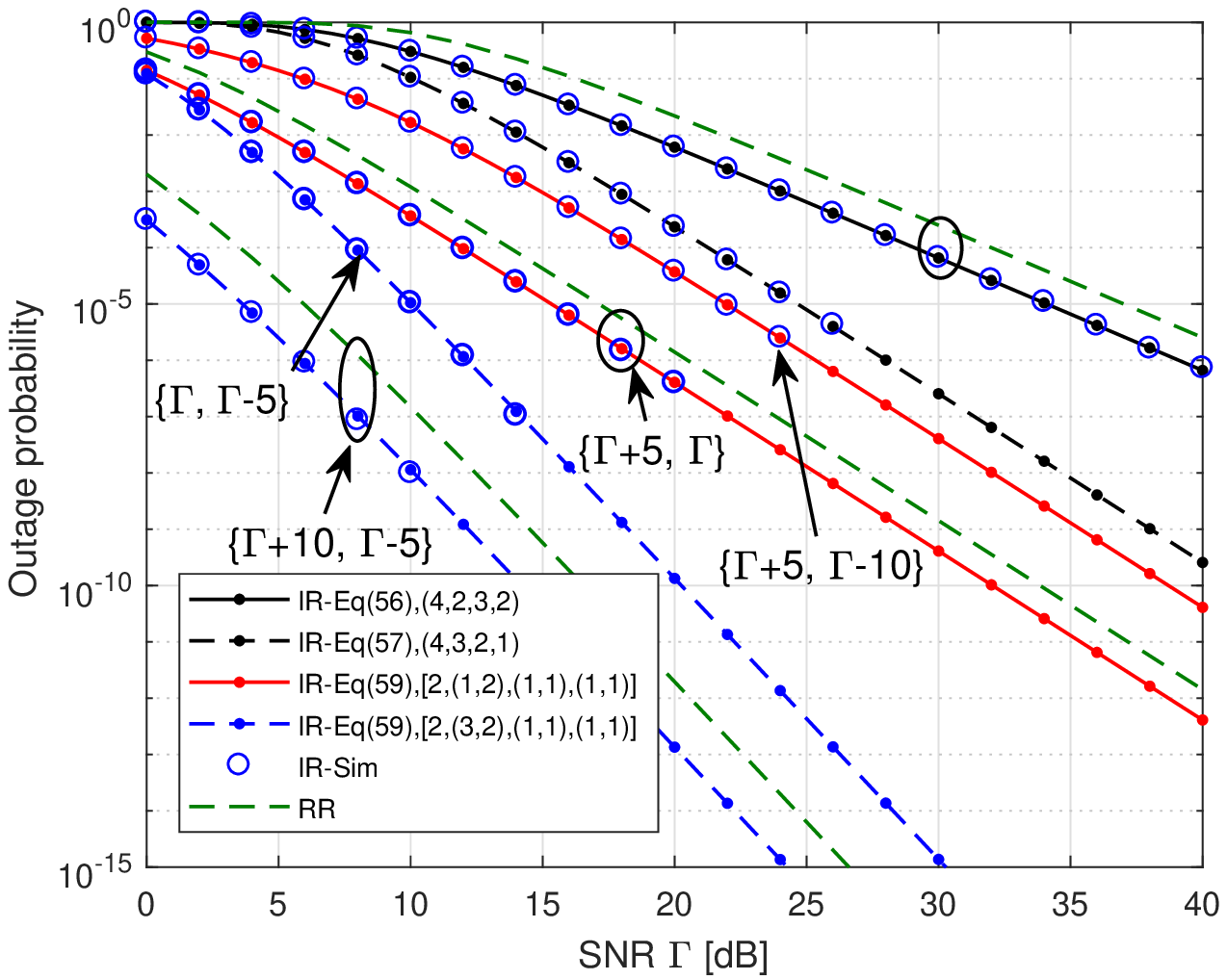}
\caption*{(b)}\label{Fig_outage_IR}
\endminipage
\caption{ The outage probability vs. $\Gamma$, where $(R,k_1,N_{uv},\delta^2_{uv})$ for $u\in\{s\}$ and $v\in\{r,d\}$, $\left[R,(k_1,k_2),(\delta^2_{sd},\delta^2_{rd}),(N_{sd},N_{rd})\right]$ and $\left\{\Gamma_{sd},\Gamma_{rd}\right\}$: (a) HARQ-RR, (b) HARQ-IR.}\label{Fig_Outage}
\end{figure*}
\subsection{Effective Capacity using HARQ-RR}
\begin{figure*}[t]
\minipage{0.5\textwidth}
  \includegraphics[width=1\linewidth]{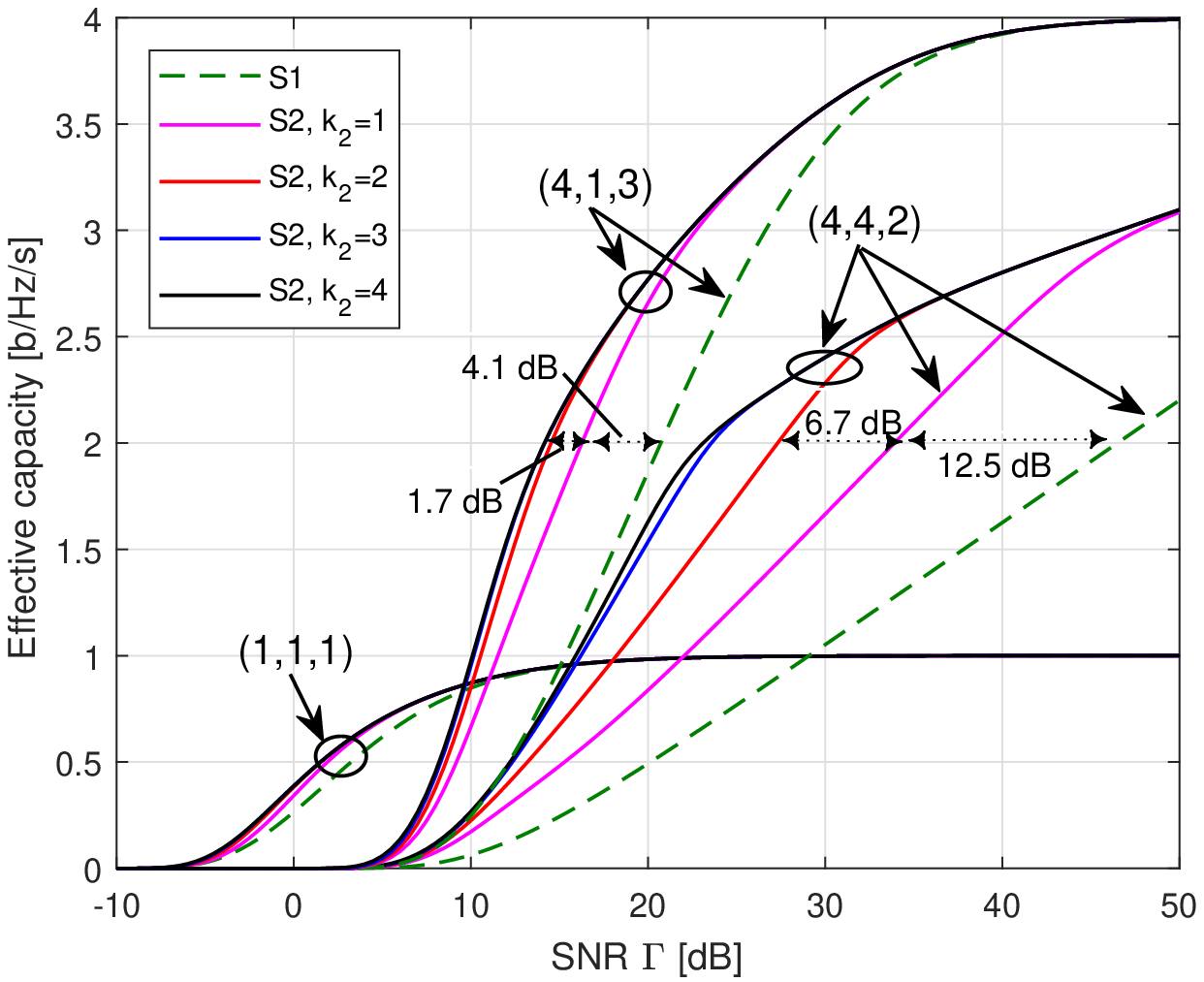}
\caption*{(a)}\label{Fig_EC_vs_SNR1}
\endminipage
\hfill
\minipage{0.5\textwidth}%
  \includegraphics[width=1\linewidth]{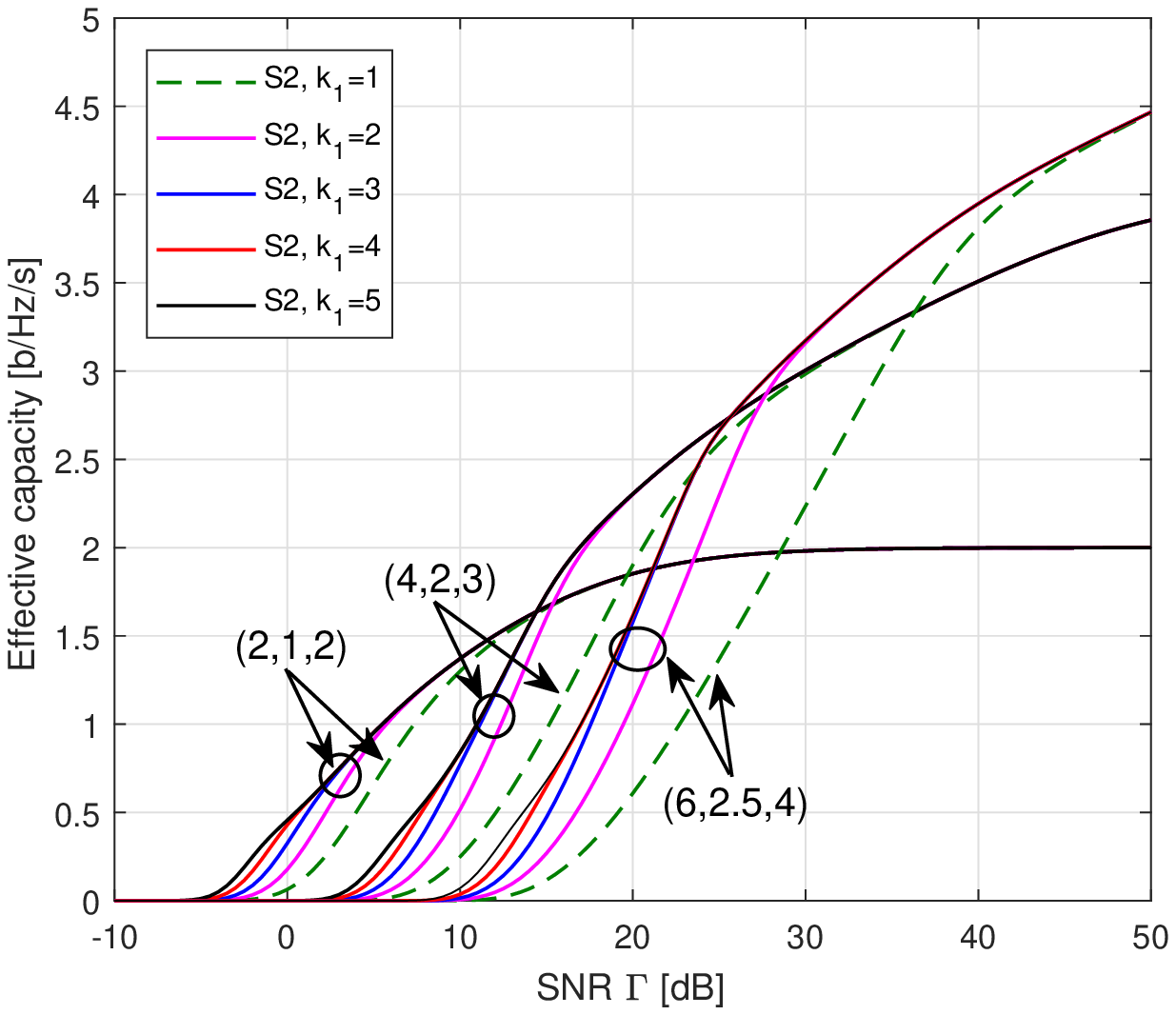}
\caption*{(b)}\label{Fig_EC_vs_SNR2}
\endminipage
\caption{Effective capacity vs. $\Gamma$ over symmetric channels: (a) $k_2$ number of relay transmissions and $(x, y, z)$ mean $(R,\theta, k_1)$, (b) $k_1$ number of source transmissions and $(R,\theta, k_2)$.}\label{EC_vs_SNR_RR}
\end{figure*}
In Fig \ref{EC_vs_SNR_RR} we plot the effective capacity vs SNR $\Gamma$ using both strategies, strategy I (S1)
and strategy II (S2), where both, the relay and the destination use HARQ-RR over symmetric channels. The transmit power $P$ and the noise power are
assumed to be one i.e, $P=N=\delta^2=1$.
For Fig \ref{EC_vs_SNR_RR} (a), we fix the number of transmissions at the source ($k_1$) and vary the number of transmissions at
the relay ($k_2$). We can see that, the effective capacity increases when $\Gamma$ increases and is close to the rate $R$ at high SNR $\Gamma$. Also, we have
1) for S2, when we set $k_1=1$ and $k_2=1$, the destination combines two packets, we can see that the effective capacity significantly
improves compared to S1 when $\theta$ and R increase,
2) when we increase the number of relay transmissions ($k_2\in\{1,2,3,4\}$), we can see that the effective capacity keeps improving up to four transmissions, negligible effective capacity gains, so $k_2=2$ may be sufficient for small $\theta$ and R, and $k_2=4$ may be sufficient for high $\theta$. Also, when $\theta$ increases the effective capacity gains increase when $k_2$ increases.
 Example when $R=4$ we can see that at $C=2$ b/Hz/s, when $\theta=1$ the gain between S1 and S2 with $k_1=k_2=1$, is $4.1$ dB and this gain increase to $12.5$ dB when $\theta=4$.
 Similarly, when we add one other transmission the gain between S2 with $k_1=k_2=1$ and $k_1=k_2=2$ is $1.7$ dB when $\theta=1$ and is $6.7$ dB when $\theta=4$.
In Fig \ref{EC_vs_SNR_RR} (b) we fix the number of transmissions at the relay $k_2$, and vary the number of transmissions by
the source $k_1$. When we increase the number of source transmission ($k_1\in\{1,2,3,4,5\}$), we can see that the effective capacity keeps improving especially at a low value of the effective capacity. However, in the middle value of the effective capacity,
 after four transmissions, we have negligible effective capacity gains, so $k_1=4$ may be sufficient.
\begin{figure*}[t]
\minipage{0.5\textwidth}
  \includegraphics[width=1\linewidth]{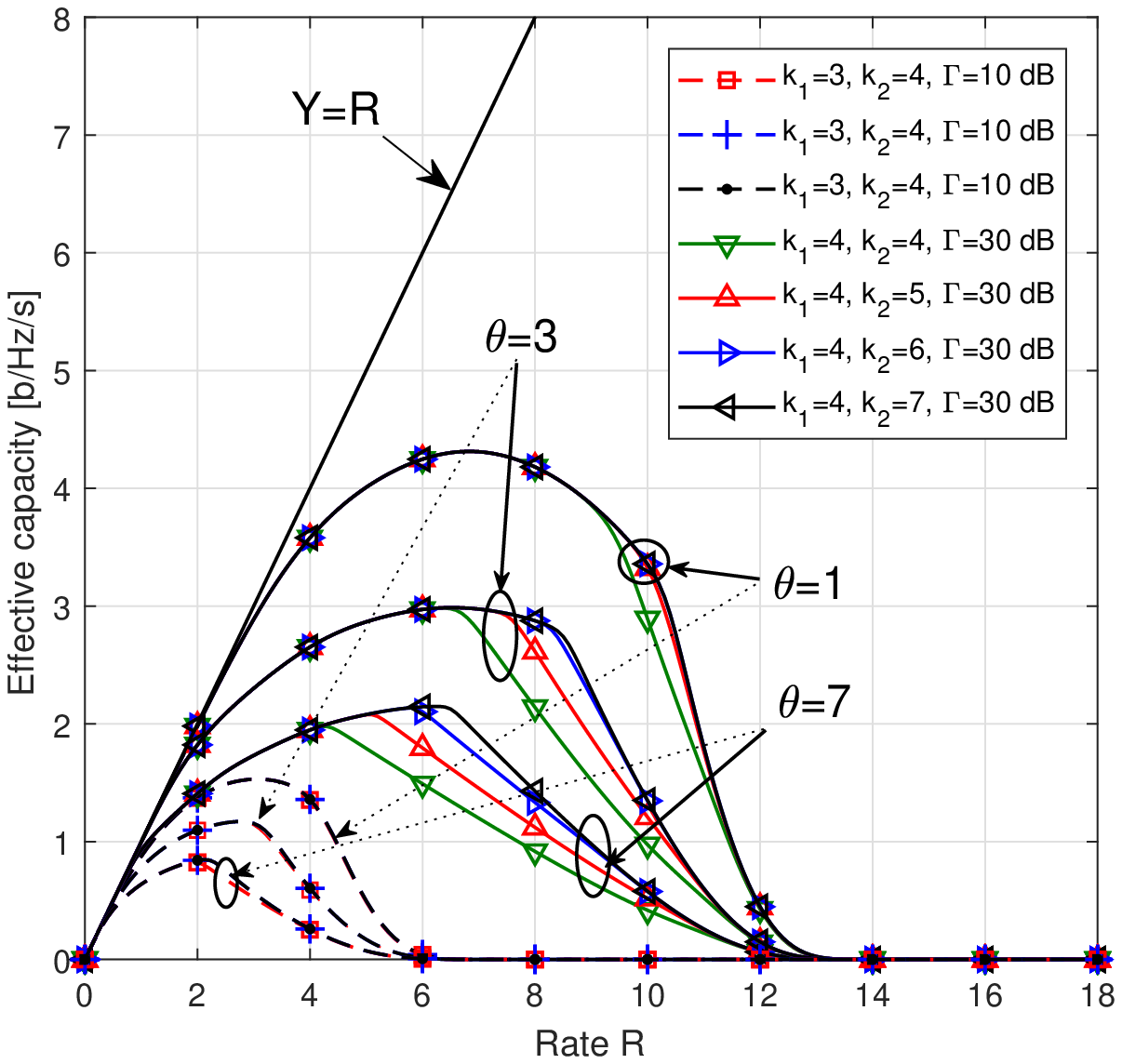}
\caption*{(a)}\label{Fig_EC_vs_R1}
\endminipage
\hfill
\minipage{0.5\textwidth}%
  \includegraphics[width=1\linewidth]{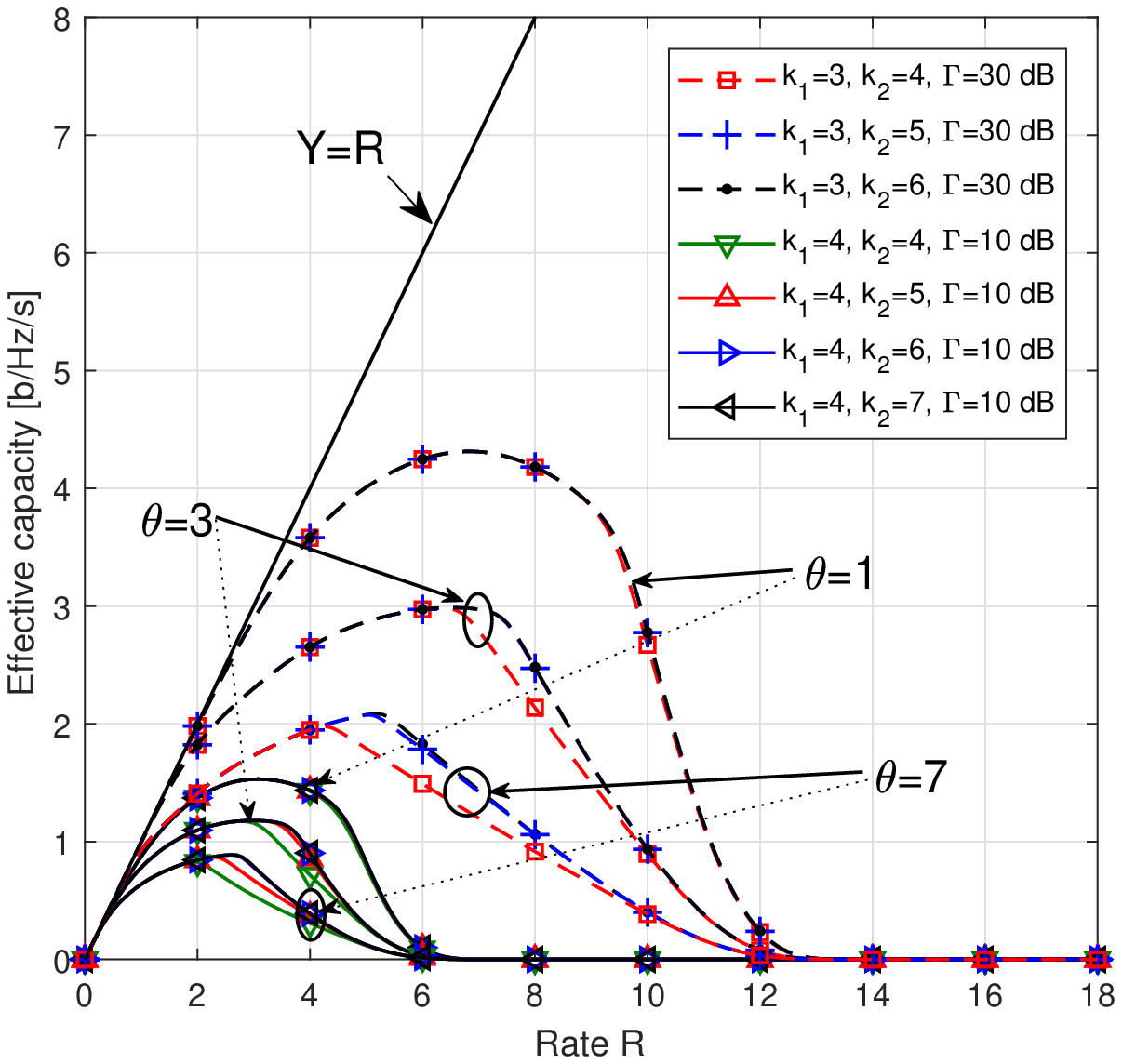}
\caption*{(b)}\label{Fig_EC_vs_R2}
\endminipage
\caption{Effective capacity vs. $R$ over symmetric channels where $P=N=\delta^2=1$: (a) $(k_1, k_2, \Gamma)\in\{(3, 4, 10)(4, 4, 30)(4, 5, 30)(4, 6, 30)(4, 7, 30\}$, (b) $(k_1, k_2, \Gamma)\in\{(3, 4, 30)(3, 5, 30)(3, 6, 30)(4, 4, 10)(4, 5, 10)(4, 6, 10)(4, 7, 10) \}$ .}\label{EC_vs_R}
\end{figure*}
In Fig \ref{EC_vs_R} we plot the effective capacity vs rate $R$ for different values of SNR $\Gamma$ ($\Gamma=10$ dB and $\Gamma=30$ dB ) and $\theta$ ($\theta\in\{1,3,7\}$).
In Fig \ref{EC_vs_R} (a) at low and high values of SNR we can see that, when the QoS Exponent $\theta$ is small, we can see that
after four transmissions from the relay, we have negligible effective capacity gains in all ranges of rate $R$ values.
When the QoS Exponent $\theta$ increases, we can see that the effective capacity can be improved when we add more transmissions
by the relay up to seven, i.e $k_2\in\{4,5,6,7\}$ but only in the middle range of rate $R$ and high SNR $\theta$ value. At a low SNR value, 
we can see that negligible effective capacity gains are obtained when more transmissions are added by the relay.
In Fig \ref{EC_vs_R} (b) we can observe the same properties.
\begin{figure*}[t]
\minipage{0.33\textwidth}
  \includegraphics[width=1\linewidth]{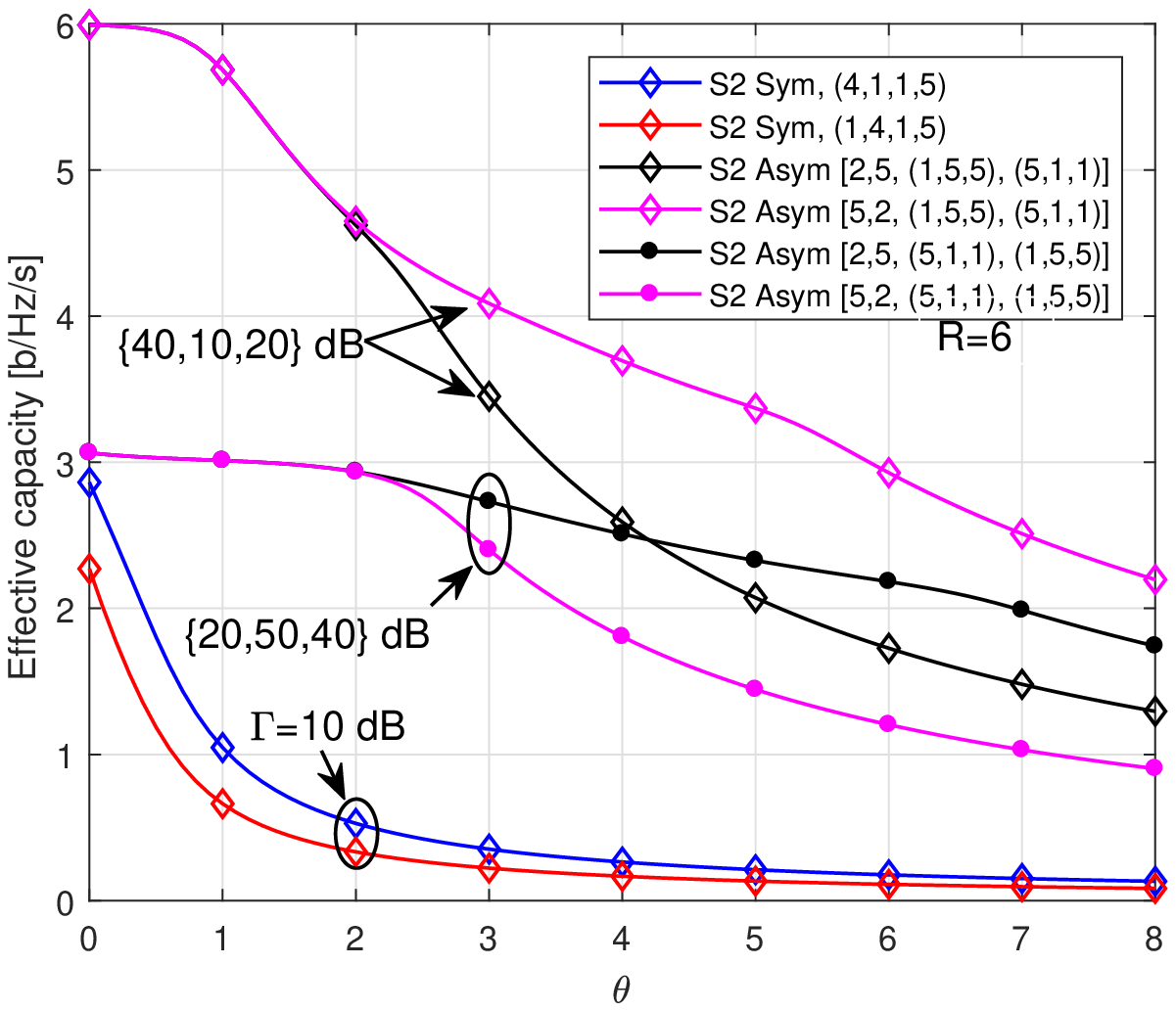}
\caption*{(a)}\label{EC_vs_Theta1}
\endminipage
\hfill
\minipage{0.33\textwidth}%
  \includegraphics[width=1\linewidth]{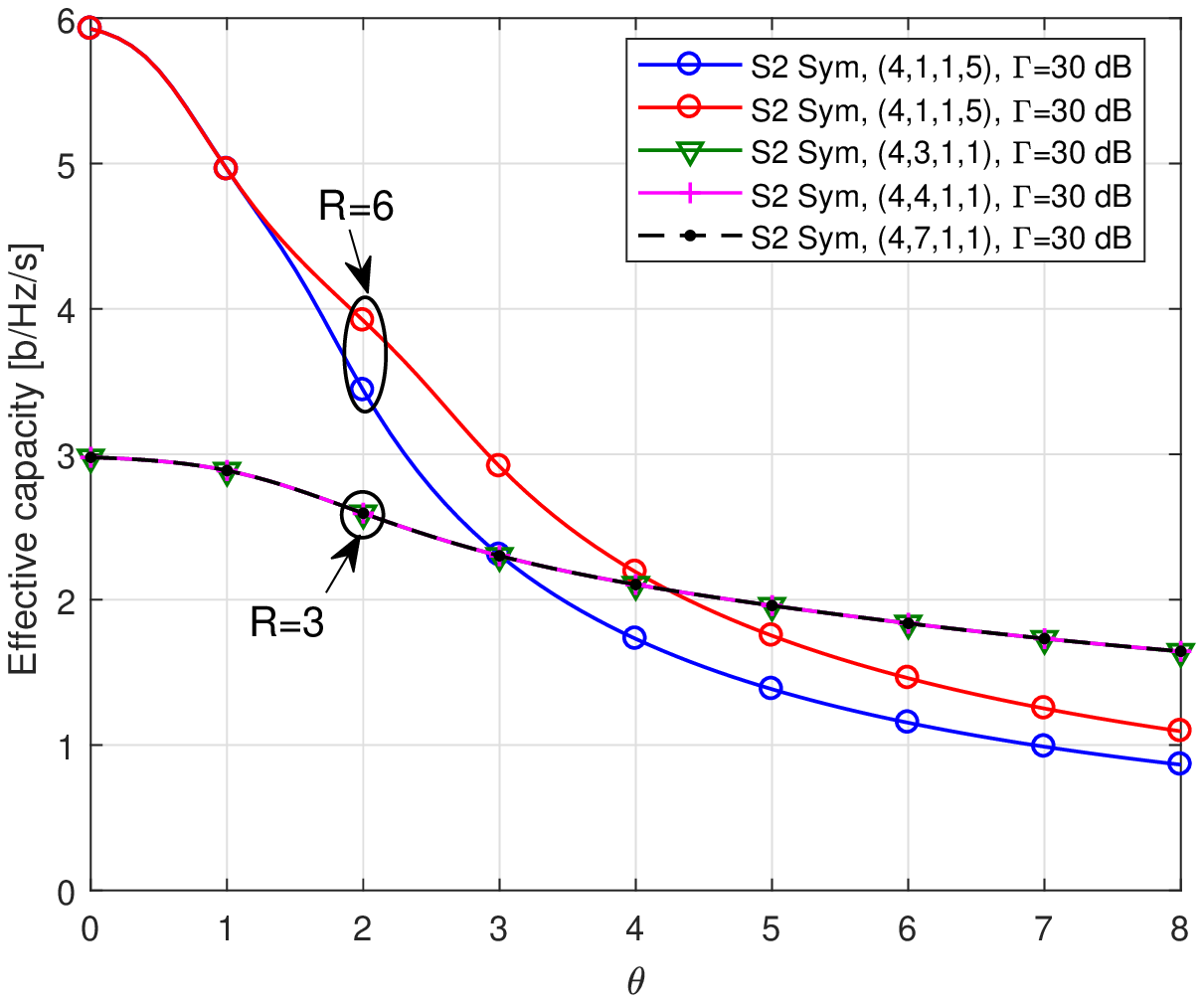}
\caption*{(b)}\label{EC_vs_Theta2}
\endminipage
\hfill
\minipage{0.33\textwidth}%
  \includegraphics[width=1\linewidth]{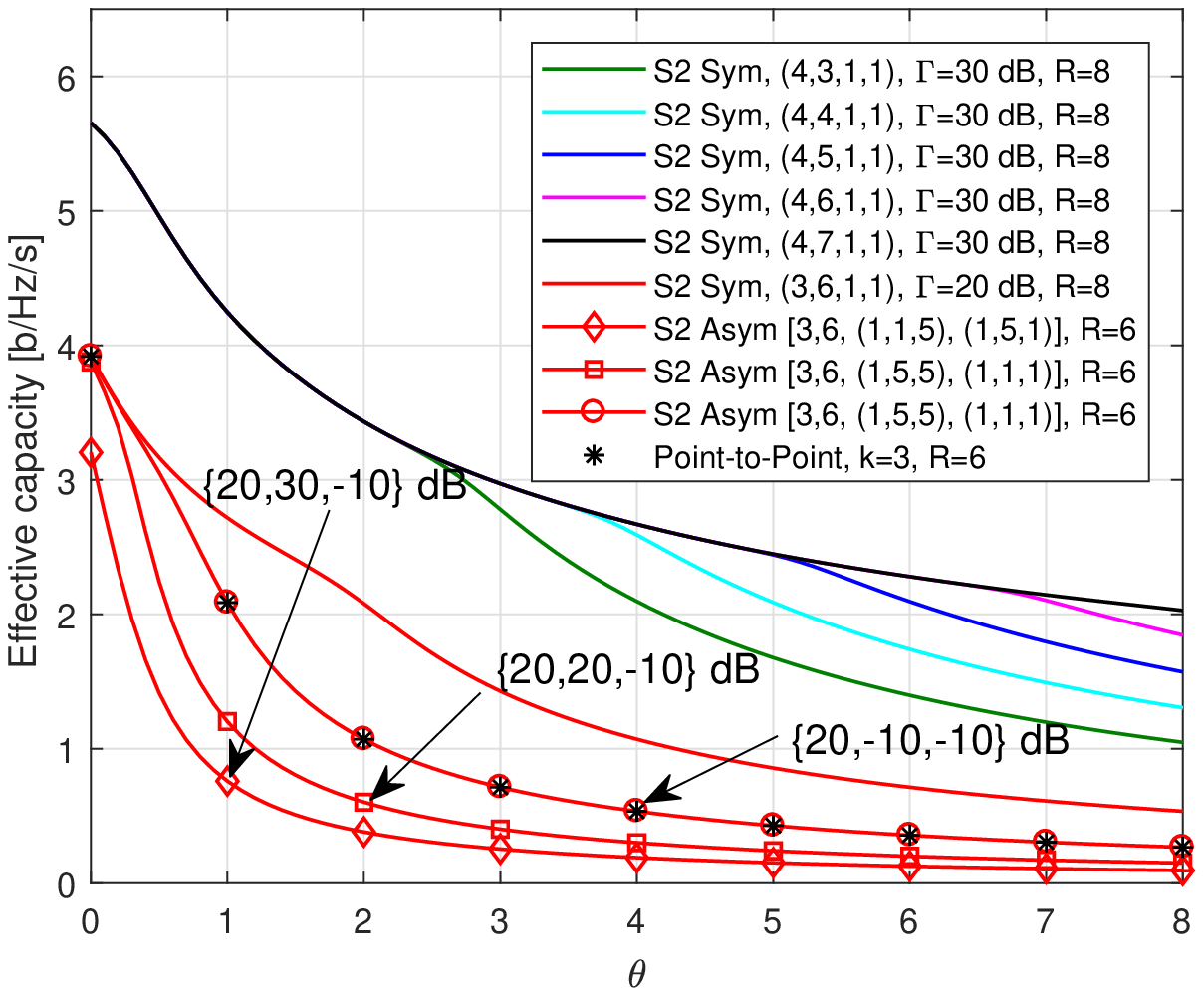}
\caption*{(c)}\label{EC_vs_Theta3}
\endminipage
\caption{Effective capacity vs. $\theta$ where $(k_1,k_2,N,\delta^2)$ and $[k_1,k_2,(N_{sd},N_{sr},N_{rd}),(\delta_{sd}^2,\delta_{sr}^2,\delta_{rd}^2)]$ and $\{\Gamma_{sd}, \Gamma_{sr}, \Gamma_{rd}\}$ dB  (a)  (b)  (c)}\label{EC_vs_THETA}
\end{figure*}
In Fig \ref{EC_vs_THETA}, we plot the effective capacity vs $\theta$ for different values of SNR and rate $R$.
In Fig \ref{EC_vs_THETA} (a) at low SNR values, we can see that when we have more transmissions at the source, the effective capacity is better compared to the case when we increase the number of transmissions at the relay.
Also, when the source-destination channel is better than the source relay and relay destination we can see that when the number of source transmissions is high gives better performance when we have a high number of transmissions at the relay. However, when the source relay and relay channels are better compared to the direct channel we can see that when the number of retransmissions by the relay is high gives better performance compared to the case when the number of source transmission is high.

In Fig \ref{EC_vs_THETA} (b) at high SNR values, we can see that we have more transmissions at the relay, the effective capacity is better compared to when we increase the number of transmissions at the source.
Also for small values of rate $R$, when we increase the number of transmissions by the relay more than four transmissions,
we have negligible effective capacity gains in all ranges of $\theta$.
In Fig \ref{EC_vs_THETA} (c), we can see that for high values of rate $R$, when we increase the number of transmissions by the relay the effective capacity improves when $\theta$ increases. Also, in the case when the relay destination channel is too bad we can see that the more the relay does not participate, the effective capacity improves, and the Maximus values can archive when the relay does not participate is a point to point $k_1$-HARQ.
\subsection{Effective Capacity using HARQ-IR}
\begin{figure*}[t]
\minipage{0.5\textwidth}
  \includegraphics[width=1\linewidth]{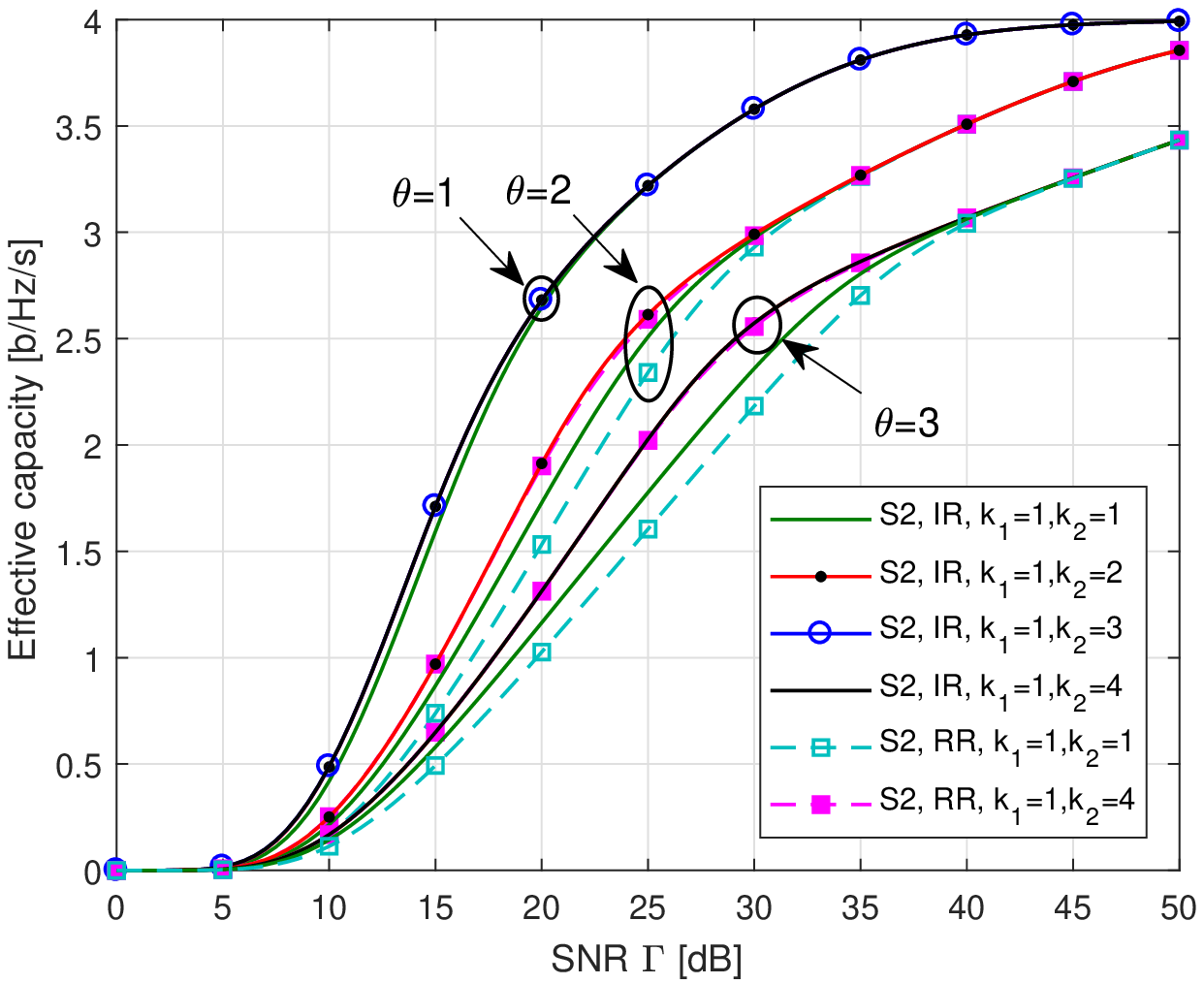}
\caption*{(a)}\label{EC_vs_SNR_IR_1}
\endminipage
\hfill
\minipage{0.5\textwidth}%
  \includegraphics[width=1\linewidth]{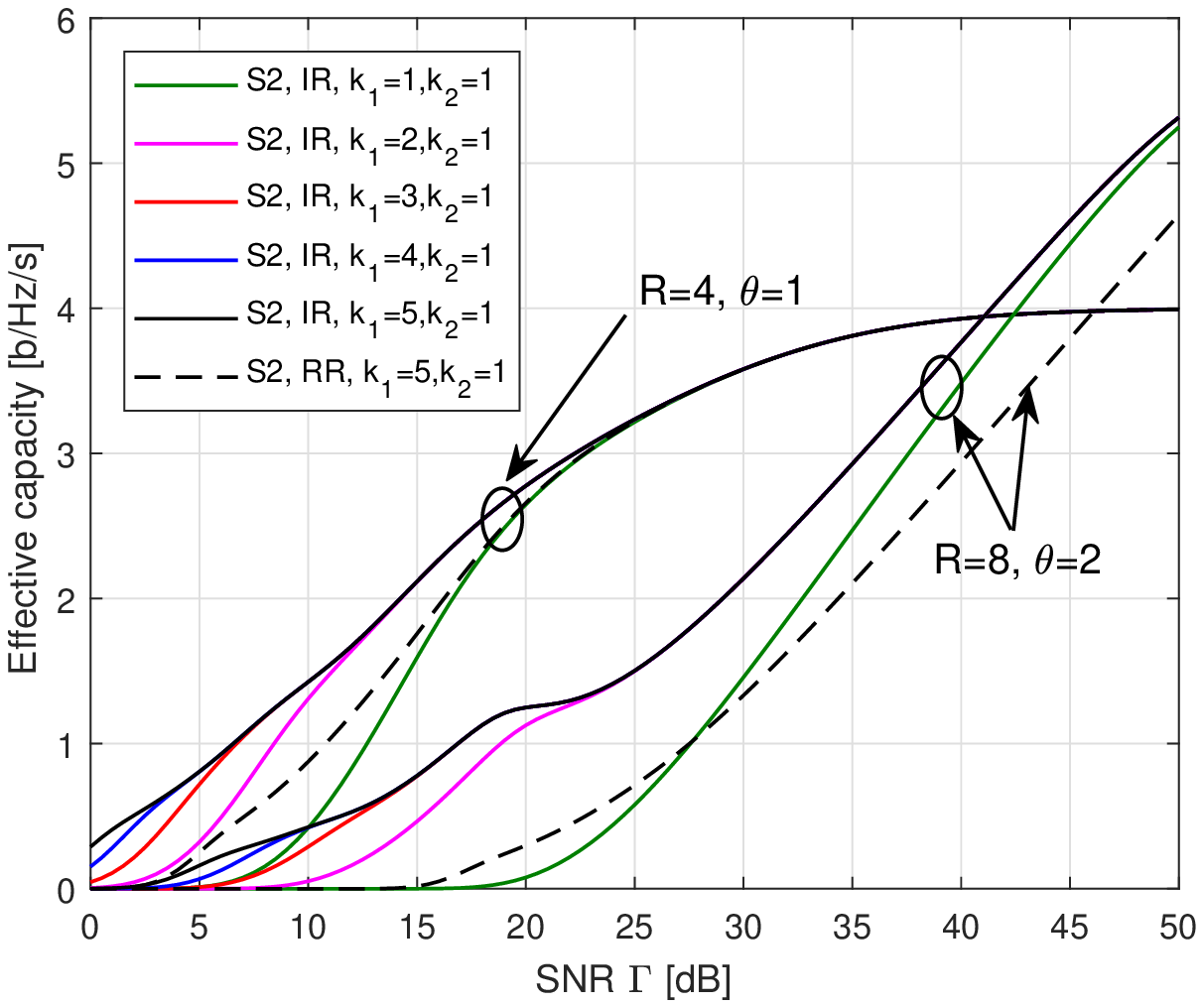}
\caption*{(b)}\label{EC_vs_SNR_IR_1}
\endminipage
\caption{Effective capacity vs. $\Gamma$ over symmetric channels: (a) rate $R=4$ b/Hz/s, $k_1=1$ and $k_2\in\{1,2,3,4\}$, (b) $k_1\in\{1,2,3,4\}$ and $k_2=1$.}\label{EC_vs_SNR_IR}
\end{figure*}
In Fig \ref{EC_vs_SNR_IR}, we plot the effective capacity vs SNR $\Gamma$ for strategy II (S2),
 where both relay and destination use HARQ-IR over symmetric channels, with $P=N=\delta^2=1$.
In Fig \ref{EC_vs_SNR_IR} (a), we fix $k_1$ and we vary $k_2$, as we can see that, for small values of rate $R$ and
$\theta$, only two transmissions i.e, $k_2=2$ are sufficient and after two transmissions we have negligible effective capacity gains. Also, in the case of $k_1=k_1=1$, we can see that HARQ-IR is better than HARQ-RR.
 In Fig \ref{EC_vs_SNR_IR} (b), we fix $k_2$  and we vary $k_1$, as we can see that,
 increasing $k_1$, the effective capacity keeps improving specially at low values of the effective capacity. However, in the middle value of the effective capacity, after four transmissions, we have negligible effective capacity gains, so $k_1=4$ may be sufficient.
\begin{figure*}[t]
\minipage{0.5\textwidth}
  \includegraphics[width=1\linewidth]{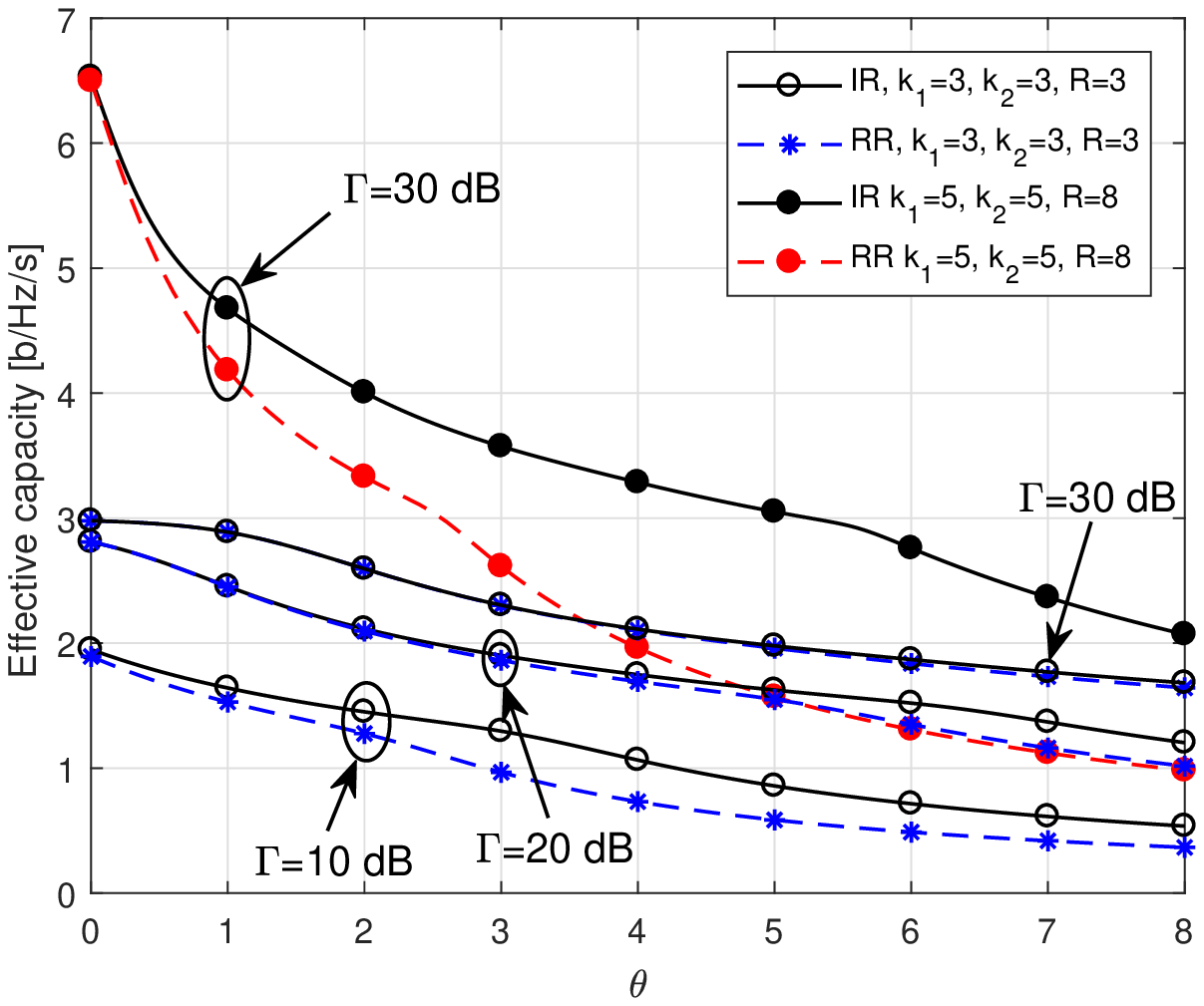}
\caption*{(a)}\label{EC_vs_theta_IR_1}
\endminipage
\hfill
\minipage{0.5\textwidth}%
  \includegraphics[width=1\linewidth]{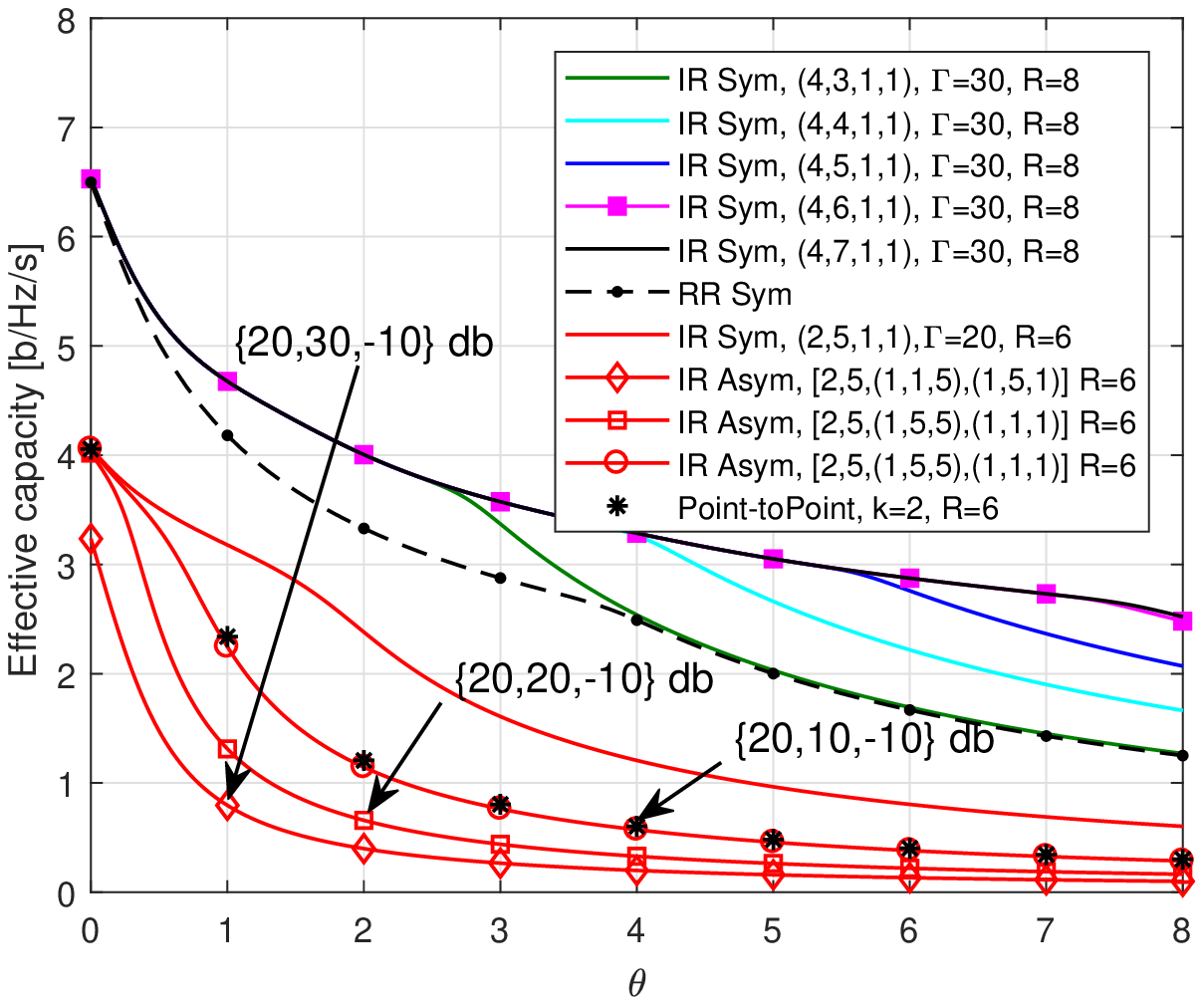}
\caption*{(b)}\label{EC_vs_theta_IR_2}
\endminipage
\caption{Effective capacity vs. $\theta$  where $(k_1,k_2,N,\delta^2)$ and $[k_1,k_2,(N_{sd},N_{sr},N_{rd}),(\delta_{sd}^2,\delta_{sr}^2,\delta_{rd}^2)]$ and $\{\Gamma_{sd}, \Gamma_{sr}, \Gamma_{rd}\}$ dB}\label{EC_vs_theta_IR}
\end{figure*}
In Fig \ref{EC_vs_theta_IR} we plot the effective capacity vs $\theta$ for different values of SNR $\Gamma$ and rate $R$.
In Fig \ref{EC_vs_theta_IR} (a) at low SNR and rate values, we can see that HARQ-IR is better than HARQ-RR specially when $\theta$ increases. The gain 
keeps reducing and becomes negligible at high SNR in all the range of $\theta$.
At high rate values we can see that HARQ-IR is better than HARQ-RR also at high SNR specially when $\theta$ increases, contrary to low value of $R$.
In Fig \ref{EC_vs_theta_IR} (b) similar to HARQ-RR we can see that for high values of rate $R$, when the number of transmissions by the relay increases , the effective capacity improves when $\theta$ increases. Also, HARQ-IR is better than HARQ-RR. Also, in the case when the relay destination channel is too bad, we can see that more the relay does not participate, the effective capacity improves, and the maximus values can archive when the relay do not participate is point to point $k_1$-HARQ.
\section{Conclusion}
In this paper based on the recurrence relation approach, the effective capacity for CC has been analyzed for retransmission schemes. The derived effective capacity can be used for any distribution over asymmetric channels. Also, the derived expressions can be used for any number of transmissions assigned to the source and the relay and both relay and destination use packets combined. Also depending on the rate values, $\Gamma$ value and $\theta$ value and number of transmissions by the source and relay, we have the following results, for symmetric channels, when the $\Gamma$ is low,  it is better to increase the number of source transmissions, and vice versa. When we increase the number of the source transmissions the effective capacity improves only at low values of $\Gamma$. Also for low rate $R$ and $\theta$, after four transmissions at the relay the effective capacity gain is negligible. When $R$ and $\theta$, increase we can see that the effective capacity improves when we add more transmissions at the relay specially at the middle of $\Gamma$ region.
For asymmetric channels, when the channel between the relay and destination is too bad we have seen that, more the relay do not participate more the effective capacity improves, and the performance are upper bounded by the performance of the effective capacity for point to point using HARQ between source and destination. Also when we compare HARQ-IR with HARQ-IR, the improvements of HARQ-IR become significant when $R$ and $\theta$ increase.
\appendix
\normalsize
\subsection{Proof of Corollary 1}\label{App2}
From \cite{jasiulewicz_convolutions_2003}\cite{kadri_convolutions_2015}, the sum of $k_i$ exponential random variable with the same parameter $\mu_i$, is a random variable $Y$ has an Erlang distribution with parameter $k_i$ and $\mu_i$ i.e $Y\sim Erl(k_i,\mu_i)$ with pdf $f_{Y_i}(t)$ for $t>0$ given by
\begin{equation}
  f_{Y}(t)=\frac{\mu_i^{k_i}t^{k_i-1}}{(k_i-1)!}e^{-\mu_it}.
\end{equation}
The cdf of random variable $Y\sim Erl(k_i,\mu_i)$ is $F_Y(t)$ given by
\begin{align}
F_{Y}(t)&=\int_{0}^{t}\frac{\mu_i^{k_i}x^{k_i-1}}{(k_i-1)!}e^{-\mu_ix}dx \\
          &\stackrel{(a)}{=}\frac{1}{(k_i-1)!}\gamma(k_i,t\mu_i)
\end{align}
where in step $(a)$ we use \cite[3.351.1]{gradshtein_table_2015}.
For random variable $Z$ define by the sum of two Erlang random variable $Y_1\sim Erl(k_1,\mu_1)$ and $Y_2\sim Erl(k_2,\mu_2)$, the pdf is $f_Z(t)$ is given by
\begin{align}
  f_Z(t)=&\sum_{i=1}^{2}\mu_i^{k_i}e^{-\mu_it}\sum_{j=1}^{k_i}\frac{(-1)^{k_i-j}}{(j-1)!}t^{j-1}\nonumber\\ \label{6}
   \times&\sum_{\begin{matrix}
  n_1+n_2=k_i-j \\
  n_i=0
\end{matrix}}^{}
\prod_{\begin{matrix}
  l=1 \\
  l\neq i
\end{matrix}}^{2}
\binom{k_l+n_l-1}{n_l}\frac
{\mu_l^{k_l}}
{(\mu_l-\mu_i)^{k_l+n_l}}.
\end{align}
The cdf of $Z$ is $F_Z(t)$ given by
\begin{align}
  F_Z(t)=&\int_{0}^{t}f_Z(x)dx\nonumber\\
        =&\sum_{i=1}^{2}\mu_i^{k_i}\sum_{j=1}^{k_i}\frac{(-1)^{k_i-j}}{(j-1)!}\int_{0}^{t}x^{j-1}e^{-\mu_ix}dx
   \times\sum_{\begin{matrix}
  n_1+n_2=k_i-j \nonumber\\
  n_i=0
\end{matrix}}^{}
\prod_{\begin{matrix}
  l=1 \nonumber\\
  l\neq i
\end{matrix}}^{2}
\binom{k_l+n_l-1}{n_l}\frac
{\mu_l^{k_l}}
{(\mu_l-\mu_i)^{k_l+n_l}}\\
=&\sum_{i=1}^{2}\mu_i^{k_i}\sum_{j=1}^{k_i}\frac{(-1)^{k_i-j}}{(j-1)!}
\mu_i^{-j}\gamma(j,t\mu_i)
   \times\sum_{\begin{matrix}
  n_1+n_2=k_i-j \nonumber\\
  n_i=0
\end{matrix}}^{}
\prod_{\begin{matrix}
  l=1 \nonumber\\
  l\neq i
\end{matrix}}^{2}
\binom{k_l+n_l-1}{n_l}\frac
{\mu_l^{k_l}}
{(\mu_l-\mu_i)^{k_l+n_l}}.
\end{align}
\subsection{Proof Corollary 2}\label{App3}
Let $H_i\sim ShE(\mu_i,\alpha_i)$. Let $Z_i$ be a random variable define by the product of $k_i$ shifted
exponential random variable $H_j$ with the same
parameter $\mu_i$ and the shifted parameter $\alpha_i$, i.e,
$H_j\sim ShE(\mu_i,\alpha_i)$ for $j\in\{1,2,\cdots,k_i\}$ and $Z_i$ define by
\begin{equation}
  Z_i=\prod_{j=1}^{k_i}H_j
\end{equation}
We denote $f_{Z_i}(z)$ the pdf of $Z_i$, then from \cite{yilmaz_product_2009} we have
\begin{equation}\label{MelProduct}
   \mathcal{M}_s\{f_{Z_i}(z)\}(s)=\prod_{j=1}^{k_i}\mathcal{M}_s\{f_{H_j}(z)\}(s)
\end{equation}
The Mellin transform of $H_j\sim ShE(\mu_i,\alpha_i)$ can be derived as
\begin{align}
  \mathcal{M}_s\{f_{H_j}(z)\}(s)&=\int_{\alpha_i}^{\infty}z^{s-1}f_{H_j}(z)dz\nonumber\\
 &=\int_{\alpha_i}^{\infty}z^{s-1}\mu_i e^{-(z-\alpha_i)\mu_i}dz\nonumber\\
 &=\mu_i e^{\alpha_i \mu_i} \int_{\alpha_i}^{\infty}z^{s-1} e^{-z\mu_i}dz\nonumber\\ \label{MelShiftE}
 &\stackrel{(a)}{=}\mu_i^{1-s} e^{\alpha_i \mu_i}\Gamma(s,\alpha_i \mu_i).
\end{align}
where in step $(a)$ we use \cite[3.351.2]{gradshtein_table_2015}. Thus, the Mellin transform of $Z_i$ can be determined
by substituting (\ref{MelShiftE}) into (\ref{MelProduct}) gives
\begin{align}
  \mathcal{M}_s\{f_{Z_i}(z)\}(s)&=\prod_{j=1}^{k_i}\mu_i^{1-s} e^{\alpha_i \mu_i}\Gamma(s,\alpha_i \mu_i)\\
                                &=\mu_i^{(1-s)k_i} e^{k_i\alpha_i \mu_i}\Gamma(s,\alpha_i \mu_i)^{k_i}
\end{align}
The PDF of $Z_i$ can be found using the inverse Mellin transform as \cite{yilmaz_product_2009}
\begin{align}
  f_{Z_i}(z)&=\frac{1}{2\pi i}\int_{\gamma-i\infty}^{\gamma+i\infty}\mathcal{M}_s\{f_{Z_i}(z)\}(s)z^{-s}ds\\
   &=\mu_i^{k_i}e^{k_i\alpha_i \mu_i}\frac{1}{2\pi i}\int_{\gamma-i\infty}^{\gamma+i\infty}\mu_i^{-sk_i} \Gamma(s,\alpha_i \mu_i)^{k_i}z^{-s}ds \\ \label{Amine1}
   &=\mu_i^{k_i}e^{k_i\alpha_i \mu_i}\frac{1}{2\pi i}\int_{\gamma-i\infty}^{\gamma+i\infty} \Gamma(s,\alpha_i \mu_i)^{k_i}\left(\mu_i^{k_i}z\right)^{-s}ds.
\end{align}
Finally, (\ref{Product_pdf}) is a direct consequence from (\ref{Amine1}). The cdf of $Z_i$ is $F_{Z_i}(z)$ can be deduced from the pdf $f_{Z_i}(z)$ as given in equations (\ref{Equation1})-(\ref{Equation2}) \cite{chelli_performance_2014}
\begin{figure*}[!t]
\begin{align}\label{Equation1}
  F_{Z_i}(z)&=\int_{0}^{z}f_{Z_i}(t)dt\\
   &=\mu_i^{k_i}e^{k_i\alpha_i \mu_i}\frac{1}{2\pi i}\int_{\gamma-i\infty}^{\gamma+i\infty} \Gamma(s,\alpha_i \mu_i)^{k_i}
   \left[\int_{0}^{z}\left(\mu_i^{k_i}t\right)^{-s}dt\right]ds\\
   &=e^{k_i\alpha_i \mu_i}\frac{1}{2\pi i}\int_{\gamma-i\infty}^{\gamma+i\infty}\frac{\Gamma(s,\alpha_i \mu_i)^{k_i}}{1-s}
   \left(\mu_i^{k_i}z\right)^{-(s-1)}ds\\ \label{Equation2}
&\stackrel{(a)}{=}e^{k_i\alpha_i \mu_i}\frac{1}{2\pi i}\int_{\gamma-i\infty}^{\gamma+i\infty}\frac{\Gamma(1+h,\alpha_i \mu_i)^{k_i}\Gamma(-h,0)}{\Gamma(1-h)}
   \left(\mu_i^{k_i}z\right)^{-(h)}dh
\end{align}
\hrulefill
\end{figure*}
where in step $(a)$ we use $1-s=\Gamma(2-s)/\Gamma(1-s)$, then the change of variable $h=s-1$ \cite{chelli_performance_2014}. Finally, (\ref{Product_cdf}) is a direct consequence from (\ref{Equation2}).
Let $Z_1=\prod_{n=1}^{k_1}H_n$ the product of $k_1$ shifted exponential random variables $H_n$ with the same parameter $\mu_1$  and the shifted paramete $\alpha_1$, i.e $H_n\sim ShE(\mu_1,\alpha_1), \forall n\in\{1,2,\cdots,k_1\}$. Similar Let $Z_2=\prod_{m=1}^{k_2}H_m$ where $H_m\sim ShE(\mu_2,\alpha_2), \forall m\in\{1,2,\cdots,k_2\}$. Let $Z$ be a random variable define by
\begin{equation}
  Z=Z_1\times Z_2=\prod_{n=1}^{k_1}H_n\prod_{m=1}^{k_2}H_m
\end{equation}
We denote $f_{Z}(z)$ the pdf of $Z$,
\begin{equation}\label{MelProduct}
   \mathcal{M}_s\{f_{Z}(z)\}(s)=\prod_{n=1}^{k_1}\mathcal{M}_s\{f_{H_n}(z)\}(s)\prod_{m=1}^{k_2}\mathcal{M}_s\{f_{H_m}(z)\}(s).
\end{equation}
The Mellin transform of $Z$ can be determined
\begin{align}
  \mathcal{M}_s\{f_{Z}(z)\}(s)&=\mu_1^{k_1} \mu_2^{k_2} e^{k_1\alpha_1 \mu_1+k_2\alpha_2 \mu_2}\Gamma(s,\alpha_1 \mu_1)^{k_1}
  \times \Gamma(s,\alpha_2 \mu_2)^{k_2}
  \left(\mu_1^{k_1} \mu_2^{k_2}\right)^{-s}
\end{align}
The PDF of $Z$ can be found using the inverse Mellin transform as
\begin{align}\label{Equation3}
  f_{Z}(z)&=\frac{1}{2\pi i}\int_{\gamma-i\infty}^{\gamma+i\infty}\mathcal{M}_s\{f_{Z_i}(z)\}(s)z^{-s}ds\\
   &=\mu_1^{k_1} \mu_2^{k_2} e^{k_1\alpha_1 \mu_1+k_2\alpha_2 \mu_2}\frac{1}{2\pi i}\int_{\gamma-i\infty}^{\gamma+i\infty}
   \Gamma(s,\alpha_1 \mu_1)^{k_1}\times\Gamma(s,\alpha_2 \mu_2)^{k_2}
  \left(\mu_1^{k_1} \mu_2^{k_2}z\right)^{-s}ds
\end{align}
Finally, (\ref{pdf_Prod_shift_Diff}) is a direct consequence from (\ref{Equation3}).The cdf of $Z$ is $F_{Z}(z)$ can be deduced from the pdf $f_{Z}(z)$ as given in equations (\ref{Equation4})-(\ref{Equation5}), Finally, (\ref{cdf_Prod_shift_Diff}) is a direct consequence from (\ref{Equation5})
\begin{align}\label{Equation4}
  F_{Z}(z)&=\int_{0}^{z}f_{Z}(t)dt\\ \label{Equation5}
   &=\mu_1^{k_1} \mu_2^{k_2} e^{k_1\alpha_1 \mu_1+k_2\alpha_2 \mu_2}
   \frac{1}{2\pi i}\int_{\gamma-i\infty}^{\gamma+i\infty}
    \Gamma(s,\alpha_1 \mu_1)^{k_1}\Gamma(s,\alpha_2 \mu_2)^{k_2}
  \left[\int_{0}^{z}\left(\mu_1^{k_1} \mu_2^{k_2}t\right)^{-s}dt\right]ds
\end{align}
\footnotesize
\bibliographystyle{ieeetr}
\bibliography{Ref}

\end{document}